\newtheorem{theorem}{Theorem}
\newcommand{\R}{\mathbb R}
\newcommand{\C}{\mathcal C}
\newcommand{\f}{\mathbf f}
\newcommand{\g}{\mathbf g}
\newcommand{\h}{\mathbf h}
\newcommand{\vb}{\mathbf v}
\newcommand{\w}{\mathbf w}
\newcommand{\x}{\mathbf x}
\newcommand{\y}{\mathbf y}
\newcommand{\z}{\mathbf z}
\newcommand{\A}{\mathbf A}
\newcommand{\F}{\mathbf F}
\newcommand{\G}{\mathbf G}
\newcommand{\Hb}{\mathbf H}
\newcommand{\I}{\mathbf I}
\newcommand{\Lb}{\mathbf L}
\newcommand{\M}{\mathbf M}
\newcommand{\N}{\mathbf N}
\newcommand{\Pb}{\mathbf P}
\newcommand{\Q}{\mathbf Q}
\newcommand{\0}{\mathbf 0}
\newcommand{\tr}{\mathrm{tr}}
\newcommand{\rank}{\mathrm{rank}}
\newcommand\Ss{\mathcal{S}}
\begin{document}

\title[Article Title]{A Randomized Exchange Algorithm for Optimal Design of Multi-Response Experiments}


\author*[1]{\fnm{P\'{a}l} \sur{Somogyi}}\email{pal.somogyi@fmph.uniba.sk}

\author[1]{\fnm{Samuel} \sur{Rosa}}\email{samuel.rosa@fmph.uniba.sk}

\author[1]{\fnm{Radoslav} \sur{Harman}}\email{radoslav.harman@fmph.uniba.sk}

\affil[1]{\orgdiv{Faculty of Mathematics, Physics and Informatics}, \orgname{Comenius University}, \orgaddress{\city{Bratislava}, \country{Slovakia}}}


\abstract{Despite the increasing prevalence of vector observations, computation of optimal experimental design for multi-response models has received limited attention. To address this problem within the framework of approximate designs, we introduce mREX, an algorithm that generalizes the randomized exchange algorithm REX (J Am Stat Assoc 115:529, 2020), originally specialized for single-response models. The mREX algorithm incorporates several improvements: a novel method for computing efficient sparse initial designs, an extension to all differentiable Kiefer's optimality criteria, and an efficient method for performing optimal exchanges of weights. For the most commonly used D-optimality criterion, we propose a technique for optimal weight exchanges based on the characteristic matrix polynomial. The mREX algorithm is applicable to linear, nonlinear, and generalized linear models, and scales well to large problems. It typically converges to optimal designs faster than available alternative methods, although it does not require advanced mathematical programming solvers. We demonstrate the usefulness of mREX on bivariate dose-response Emax models for clinical trials, both without and with the inclusion of covariates.}

\keywords{Optimal experimental design, Multi-response models, Convex optimization algorithms, D-optimality, Kiefer's criteria, Emax models\footnote[2]{The research was supported by the Collegium Talentum Programme of Hungary and the Slovak Scientific Grant Agency (grant VEGA 1/0327/25).}}



\maketitle

\section{Introduction}\label{sec1}

The aim of optimal experimental design is to conduct trials in a way that maximizes the information gained from an experiment, enabling the most accurate estimation of the unknown parameters of an underlying statistical model. For a general introduction to optimal design of experiments, see, for example, \cite{fedorov1972}, \cite{pazman86}, \cite{pukelsheim}, \cite{AtkinsonEA07}, \cite{fedorovleonov2013}, \cite{pronzatopazman2013} or \cite{lopezfidalgo2023}.

In this paper, we consider a design space $\mathcal{X}$ consisting of a finite number of design points $\x_1, \ldots, \x_N$, representing all permissible trial conditions. A finite design space is natural in many applications; moreover, if an application allows for a continuous design space, it can often be replaced with a fine discretization without significant loss in efficiency (e.g., \cite{gex}).

An approximate design $\w$ is any vector in $\R^N$ with non-negative components $w_1,\ldots,w_N$ that sum to $1$. For any $i$, the value of $w_i$ represents the proportion of trials to be performed at the design point $\x_i$. A usual representation of an approximate design $\w$ is by means of a table
\begin{equation}
\w =
\begin{pmatrix}
\x_{i_1} & \x_{i_2} & \cdots & \x_{i_K} \\
w_{i_1}  & w_{i_2}  & \cdots & w_{i_K} \\
\end{pmatrix}, 
\end{equation}
where $\x_{i_1}, \x_{i_2}, \ldots, \x_{i_K}$ form the support of $\w$, i.e., the set of all design points $\x_i$ such that $w_i>0$.

In practical experiments, the proportions provided by the approximate design need to be converted into actual numbers of trials, so-called exact designs, employing some rounding procedure (e.g., \cite{rounding}, cf. \cite{FilovaHarman20}). Since we focus on approximate designs, we will simply call them ``designs''.
\bigskip

For the standard linear regression model
\begin{equation}\label{eq:model_uni}
    y_\x = \f^T(\x) \boldsymbol{\beta} + \epsilon_\x,
\end{equation}
with real-valued responses $y_\x$, an unknown vector $\boldsymbol{\beta} \in \R^m$ of parameters and uncorrelated homoscedastic errors, the amount of information gained from the experiment designed according to $\w$ is expressed by the information matrix
\begin{equation}\label{eq:infmat_uni}
    \M(\w) = \sum_{i=1}^N w_i \f(\x_i) \f^T(\x_i).
\end{equation}
If $\epsilon_\x \sim \mathcal{N}(0, \sigma^2)$ and $\w$ corresponds to an exact design of size $n$, then $\sigma^{-2}n\M(\w)$ is the Fisher information matrix for $\boldsymbol{\beta}$. Note that the information matrix can also be expressed as
\begin{equation*}
    \M(\w) = \sum_{i=1}^N w_i \Hb(\x_i),
\end{equation*}
where $\Hb(\x_i) = \f(\x_i) \f^T(\x_i)$ is the ``elementary'' information matrix that measures the amount of information gained from one trial at the design point $\x_i \in \mathcal{X}$, for any $i$. For the standard model \eqref{eq:model_uni},  the ranks of the elementary information matrices are $1$. The combined information matrix $\M(\w)$ is clearly positive semidefinite, and can be of full rank $m$.

The quality of a design $\w$ is measured by the size of $\M(\w)$, via a function $\Phi$ known as optimality criterion, defined on the set of all positive semidefinite matrices. That is, the objective of optimal experimental design is to solve the optimization problem:
\begin{equation}\label{eq:genproblem}
    \mathop{\mathrm{argmax}}_{\w \in \Xi} \Phi[\M(\w)],
\end{equation}
where $\Xi = \{ \w \in \R^N \, : \, w_1 \geq 0, \ldots, w_N \geq 0, \sum_i w_i = 1 \}$ is the set of all designs; formally the probability simplex in $\R^N$. Any solution $\w^*$ to this optimization problem is referred to as a $\Phi$-optimal design. 

While there exist many optimality criteria, we focus on the class of Kiefer's $\Phi_p$-criteria for $p \in [0,\infty)$, also called matrix means (see, e.g., \cite{pukelsheim}, Section 6.7),  defined as:
\begin{equation}
    \Phi_p \left[ \M \right] =
\begin{cases}
(\det\left[\M\right])^{1/m}  & \text{ for } p = 0 \text{ and } \M \text{ nonsingular}, \\
\left( \frac{1}{m} \: \mathrm{tr}\left[\M^{-p}\right] \right)^{-1/p} & \text{ for } p \in \left(0, \infty \right) \text{ and } \M \text{ nonsingular},\\
0  & \text{ for } \M \text{ singular}. \label{kiefercriterion}
\end{cases}
\end{equation}
For $p=0$ and $p=1$, formula \eqref{kiefercriterion} corresponds to the most prominent criteria of $D$- and $A$-optimality, respectively. The limit of $\Phi_p$ as $p$ tends to infinity, gives the criterion of $E$-optimality, which is equal to the smallest eigenvalue of $\M$  (e.g., Section 6.4 in \cite{pukelsheim}). 
\bigskip

The properties of optimal designs for the single-response model \eqref{eq:model_uni} and their constructions, both analytical and algorithmic, are widely studied in the literature. In this paper, we instead focus on the optimal design problem for the linear regression model with multivariate responses:
\begin{equation}\label{eq:model_multi}
    \y_\x = \F^T(\x) \boldsymbol{\beta} + \boldsymbol{\epsilon}_\x, 
\end{equation}
which is much less studied from the optimal design perspective. Here, $\y_\x$ represents the $s$-dimensional response from the trial at the design point $\x \in \mathcal{X}$, $\boldsymbol{\beta} \in \mathbb{R}^m$ is the vector of parameters, and $\F(\x)$ is a known $m \times s$ matrix. The vectors $\boldsymbol{\epsilon}_\x$ of random errors are assumed to have expectation $\mathbf{0}_s$ and covariance matrix $\sigma^2\boldsymbol{\Sigma}$, where $\boldsymbol{\Sigma}$ is a known and positive definite $s \times s$ matrix. Note that the dimension of $\boldsymbol{\Sigma}$ is typically small (we often consider bivariate response models, i.e., $s = 2$), and it can be estimated from previous experiments (cf. \cite{ting2006}, \cite{atashgah2007}, \cite{magnusdottir2016}). Across different trials, the error vectors are assumed to be independent. 

In short, we refer to \eqref{eq:model_uni} and \eqref{eq:model_multi} as the single-response and multi-response regression (or model), respectively. For simplicity, we formulate our results for the linear multi-response settings \eqref{eq:model_multi}, but note that they easily extend to nonlinear models, as shown in Section \ref{sec:further}.

For the multi-response regression, the appropriate information matrix of a design $\w$ is 
\begin{equation*}
    \M(\w) = \sum_{i=1}^N w_i \F(\x_i) \boldsymbol{\Sigma}^{-1} \F^T(\x_i) = \sum_{i=1}^N w_i \Hb(\x_i),
\end{equation*}
where the elementary information matrix corresponding to the design point $\x_i \in \mathcal{X}$ is of the form 
\begin{equation*}
        \Hb(\x_i) =  \G(\x_i)\textbf{G}^T(\x_i), \G(\x_i) = \F(\x_i)\boldsymbol{\Sigma}^{-1/2}.
\end{equation*}
 Note that, for any $i$, the ranks of the matrices $\F(\x_i)$, $\G(\x_i)$ and $\Hb(\x_i)$ are the same, and typically greater than one. As in the single-response regression, a $\Phi$-optimal design for multi-response regression is a solution to \eqref{eq:genproblem}. To make the $\Phi_p$-optimal design problem nontrivial, we assume that there exists a nonsingular design, i.e., a design $\w$ such that $\M(\w)$ is nonsingular. This assumption is equivalent to 
\begin{equation}\label{eq:basic}
    \C(\G(\x_1),\ldots,\G(\x_N))=\R^m.
\end{equation}

Although the $\Phi$-optimal design problem for multi-response models is less studied, it is important and appears in numerous modern applications (e.g., \cite{magnusdottir2016}, \cite{seufert2021},  \cite{seurat2021}, \cite{tsirpitzi_miller}, or \cite{radloff2023}). Nevertheless, only a few algorithms for computing $\Phi$-optimal designs for the multi-response model on a finite design space, typically for specific $\Phi_p$-criteria, have been considered in the literature. These include the multiplicative method (\cite{ucinski2007}, \cite{ht09}, \cite{yu2010}), an approach based on the Newton method (\cite{ybt}, cf. \cite{pronzato2014}) and the second-order cone programming approach (\cite{sagnol_socp}, \cite{sagnol2015}). We also note that some algorithms for single-response models, such as the vertex direction method (\cite{wynn1970}, \cite{fedorov1972}) and semidefinite programming (e.g., \cite{VandenbergheBoyd}), can be directly generalized for multi-response models (e.g., \cite{atashgah2007}, \cite{atashgah2009}, \cite{wong2018} or \cite{duarte}).

Building on the work of \cite{bohning}, a randomized exchange algorithm (REX; \cite{rex}) was recently proposed for the single-response model. This algorithm rapidly converges even for problems with relatively large sizes $N$ of the design space and numbers $m$ of parameters, and outperforms the competing methods in many situations (as demonstrated in Section 4 of \cite{rex}). In this paper, we propose the mREX (multi-response REX) algorithm, which generalizes REX to provide a fast and reliable solution for the multi-response settings.

The novel aspects of the generalization pertain in particular to:
\begin{itemize}
    \item \textbf{A method of computing initial designs.} As we prove, the design obtained by the proposed initialization method always has at most $m$ non-zero components and a  nonsingular information matrix. The design is typically much more efficient than a design generated uniformly randomly. The simultaneous properties of ``sparsity'' and efficiency of the initial design make it particularly suitable for algorithms such as mREX.
    \item \textbf{Extensions to involve all Kiefer's $\Phi_p$-criteria with $p \in [0,\infty)$.} The original REX algorithm was formulated only for $D$- and $A$-optimality, corresponding to Kiefer's criteria for $p=0$ and $p=1$, respectively. This extension allows experimenters to consider, for instance, criteria with $p \in (0,1)$ which form a compromise between $D$- and $A$-optimality, or criteria with larger $p$, which are close to the $E$-optimality criterion.
    \item \textbf{A method of solving inner optimization problems.} Since in multi-response problems we in general do not have analytical formulas for the solution of key inner optimization problems, we propose an efficient numerical scheme. This scheme is based on the idea of fast ``full'' exchange verification and, for the most common criterion of $D$-optimality, it is further significantly improved by a method based on the characteristic polynomial of a matrix.
\end{itemize}

We will show that the mREX algorithm retains the beneficial properties of the standard REX algorithm; in particular, it is simple to implement in any computing environment, can be applied to large problems, and generally outperforms the state-of-the-art competing methods applicable to the optimal design problem of multi-response regression.

The rest of this paper is organized as follows. In Section \ref{sec:problem}, we summarize the fundamental mathematical properties of the multi-response optimal design problem for the Kiefer's optimality criteria, such as the lower bound on the efficiency of any given design. The central Section \ref{sec:mREX} presents the algorithm mREX, particularly the computation of initial designs and solution of the inner optimization problems. Section \ref{sec:further} examines extensions of the model \eqref{eq:model_multi} that allow for the use of mREX, and Section \ref{sec:examples} provides numerical studies demonstrating the utility of mREX for an assessment of a dose-response design. In this section, we additionally demonstrate the computational advantages of mREX compared to existing algorithms capable of solving the multi-response optimal design problem. Proofs of mathematical theorems are given in Appendix.
\bigskip

\textbf{Notation.} The symbols $\Ss^m$, $\Ss^m_+$ and $\Ss^m_{++}$ denote the sets of all $m \times m$ symmetric, (symmetric) positive semidefinite and (symmetric) positive definite matrices, respectively. The symbols $\0_{m \times m}$ and $\0_m$ denote the $m\times m$ matrix of zeros and the $m \times 1$ vector of zeros, respectively. For a matrix $\A$, $\C(\A)$ is the column space of $\A$, $\A^+$ is the Moore-Penrose pseudoinverse of $\A$, and $(\A)_{ij}$ is the $ij$-th element of $\A$.
For a linear subspace $\mathcal{L}$ of $\R^m$, we denote the orthogonal projector onto $\mathcal{L}$ by $\Pb_\mathcal{L}$, and $\mathcal{L}^\perp$ denotes the orthogonal complement of $\mathcal{L}$. For a real number, the symbol $\lceil.\rceil$ denotes the ceiling function.

\section{The multi-response problem of Kiefer's optimality}\label{sec:problem}

In this section, we provide some fundamental properties of the multi-response problem of Kiefer's $\Phi_p$-optimality. For any $p \in [0, \infty)$, the criterion $\Phi_p$ is positively homogeneous, continuous, and concave on $\Ss^m_+$, strictly positive on $\Ss^m_{++}$, and it vanishes on $\Ss^m_+\setminus\Ss^m_{++}$; cf. Chapter 6 in \cite{pukelsheim}. The criterion $\Phi_p$ is not strictly concave on $\Ss^m_{++}$ in the traditional sense.\footnote{$\Phi_p$ \emph{is} strictly concave in the nontypical sense of definition in \cite{pukelsheim}, Section 5.2.} However, a simple extension of the argument in Section 6.17 of \cite{pukelsheim} can be used to prove that $\Phi_p$ is strictly \emph{log}-concave on $\Ss^m_{++}$. 

Note that the set of all information matrices $\mathcal{M}:=\{\M(\w): \w \in \Xi\}$ is the convex hull of the finite set $\{\Hb(\x_1),\ldots,\Hb(\x_N)\}$, therefore it is convex and compact. Moreover, the set $\mathcal{M} \cap \Ss^m_{++}$ of all nonsingular information matrices is convex and nonempty due to our assumption of the existence of a nonsingular design. Thus, the properties of the Kiefer's criteria imply that there exists at least one $\Phi_p$-optimal design, and the information matrix of any $\Phi_p$-optimal design is nonsingular. The strict log-concavity of $\Phi_p$ in addition implies that the information matrix of all $\Phi_p$-optimal designs is unique.

While the $\Phi_p$-optimal information matrix is unique, there is often a continuum of $\Phi_p$-optimal designs. Analogously to the single-response $\Phi_p$-optimal design problem, it is possible to show (cf. Section 8.2 in \cite{pukelsheim}) that there always exists a $\Phi_p$-optimal design with a support size of at most $\frac{1}{2}m(m+1)+1$. Hence, since in statistical applications we usually have $m \ll N$, there typically exists a ``sparse'' $\Phi_p$-optimal design. Importantly, while in the single-response case the size of the support of an optimal design is at least $m$, in the multi-response problem the optimal support size is often smaller than $m$. Clearly, a lower bound on the support size of any $\Phi_p$-optimal design is the minimum $k$ such that the matrix $(\G(\x_{i_1}),\ldots,\G(\x_{i_k}))$ has full row rank for some indices $i_1,\ldots,i_k$, i.e., the support of a $\Phi_p$-optimal design can be as small as $\lceil m/s \rceil$. 
\bigskip

The $\Phi_p$-efficiency of a design $\w$ relative to a nonsingular design $\widetilde{\w}$ is $\Phi_p\left[ \M(\w) \right] / \Phi_p\left[ \M(\widetilde{\w}) \right]$. We refer to the $\Phi_p$-efficiency of $\w$ relative to a $\Phi_p$-optimal design $\w^*$ simply as $\Phi_p$-efficiency of $\w$ (e.g., Section 5.15 in \cite{pukelsheim}): 
\begin{equation*}
    \text{eff}_{\Phi_p} (\w) = \frac{\Phi_p\left[ \M(\w) \right]}{\Phi_p\left[ \M(\w^*) \right]}.
\end{equation*}
Because the criteria $\Phi_p$ are positively homogeneous, $\text{eff}_{\Phi_p} (\w)$ has a clear interpretation: Assume that we are able to design an experiment of size $n$ according to an approximate design $\w$. Then, any exact design attaining the same or a better value of the $\Phi_p$-criterion must include at least $n \: \text{eff}_{\Phi_p} (\w)$ trials. Importantly, the convex nature of the problem \eqref{eq:genproblem} with $\Phi=\Phi_p$ provides means of computing a lower bound on the $\Phi_p$-efficiency of any given design $\w$, as we explain next.

The criterion $\Phi_p$ is smooth on $\mathcal{S}^m_{++}$ with gradient (e.g., \cite{pukelsheim}, Section 7.19)
\begin{equation}
 \nabla \Phi_p \left[ \M \right] = 
   \frac{ \Phi_p \left[ \M \right]}{\text{tr} \left[\M^{-p} \right] } \M^{-p-1} \quad \text{ for all } \M \in \mathcal{S}_{++}^m.
\label{nablakiefer}
\end{equation}

A tool often utilized for the theory and computation of $\Phi_p$-optimal designs is the directional derivative (e.g., \cite{pazman86}, Section VI.4). The directional derivative of $\Phi_p$ in $\M \in \mathcal{S}_{++}^m$ in the direction of $\mathbf{N} \in \mathcal{S}_+^m$ is defined as
\begin{equation}
    \partial \Phi_p \left[\M, \mathbf{N}\right] = \lim_{\beta \rightarrow 0^+} 
    \frac{\Phi_p\left[(1-\beta)\M + \beta \mathbf{N} \right] - \Phi_p\left[\M \right]}{\beta}
\end{equation}
and it can be explicitly calculated as follows (e.g., \cite{gaffke1985}):
\begin{equation}\label{eq:dirderPhip}
    \partial \Phi_p \left[\M, \mathbf{N}\right] = \text{tr} \left[ \nabla \Phi_p\left[\M\right](\mathbf{N}-\M) \right]= \Phi_p \left[\M\right] \left( \frac{\text{tr}\left[\M^{-p-1}\mathbf{N}\right]}{\text{tr}\left[\M^{-p}\right]} -1 \right).
\end{equation}
Using the directional derivatives, we provide the following bound on the $\Phi_p$-efficiency of a design.
\bigskip
\begin{theorem}\label{thm1}
    The $\Phi_p$-efficiency of a nonsingular design $\w$ satisfies
    \begin{equation}\label{eq:effbnd}
        \mathrm{eff}_{\Phi_p}(\w) \geq \frac{\mathrm{tr}\left[\M^{-p}(\w)\right]}{\max_{i=1,\ldots,N} (\mathbf{g}_p(\w))_i},
    \end{equation}
    where $\mathbf{g}_p(\w)$ is a vector with components $i=1,\ldots,N$ given by
    \begin{eqnarray}\label{eq:g}
        (\mathbf{g}_p(\w))_i&=&\mathrm{tr}\left[\M^{-p-1}(\w)\Hb(\x_i)\right] \nonumber \\
        &=& \mathrm{tr}\left[\G^T(\x_i)\M^{-p-1}(\w)\G(\x_i)\right].
    \end{eqnarray}
\end{theorem}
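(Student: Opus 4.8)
The plan is to derive the bound directly from the concavity of $\Phi_p$ combined with the explicit directional-derivative formula \eqref{eq:dirderPhip}. Let $\w^*$ denote a $\Phi_p$-optimal design, so that $\M(\w^*)$ is the (unique, nonsingular) optimal information matrix. First I would exploit concavity along the line segment joining $\M(\w)$ to $\M(\w^*)$: writing $g(\beta) = \Phi_p[(1-\beta)\M(\w) + \beta\M(\w^*)]$, the function $g$ is concave on $[0,1]$ with $g(0) = \Phi_p[\M(\w)]$, $g(1) = \Phi_p[\M(\w^*)]$, and right derivative $g'(0^+) = \partial\Phi_p[\M(\w),\M(\w^*)]$. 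One-dimensional concavity then yields the tangent bound
\begin{equation*}
\Phi_p[\M(\w^*)] \leq \Phi_p[\M(\w)] + \partial\Phi_p[\M(\w),\M(\w^*)],
\end{equation*}
which is well-defined because $\M(\w)$ is nonsingular by assumption.

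Next I would substitute the closed form of the directional derivative from \eqref{eq:dirderPhip}. The additive $-1$ inside the bracket cancels against the leading $\Phi_p[\M(\w)]$, leaving
\begin{equation*}
\Phi_p[\M(\w^*)] \leq \Phi_p[\M(\w)] \cdot \frac{\mathrm{tr}[\M^{-p-1}(\w)\M(\w^*)]}{\mathrm{tr}[\M^{-p}(\w)]}.
\end{equation*}
The crucial observation is then that $\M(\w^*) = \sum_{i=1}^N w_i^*\,\Hb(\x_i)$ is a convex combination of the elementary matrices, so by linearity of the trace and the definition \eqref{eq:g},
\begin{equation*}
\mathrm{tr}[\M^{-p-1}(\w)\M(\w^*)] = \sum_{i=1}^N w_i^* (\mathbf{g}_p(\w))_i \leq \max_{i=1,\ldots,N} (\mathbf{g}_p(\w))_i,
\end{equation*}
where the inequality uses only that the weights $w_i^*$ are nonnegative and sum to one, so a weighted average cannot exceed the largest component. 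This replacement of the optimal weighted average by the maximum over all design points is exactly what makes the bound computable without knowing $\w^*$.

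Combining the last two displays and dividing by $\Phi_p[\M(\w^*)] > 0$ — strictly positive because the optimal information matrix is nonsingular and $\Phi_p$ is strictly positive on $\Ss^m_{++}$ — rearranges immediately into the claimed inequality \eqref{eq:effbnd}. I do not anticipate a genuine obstacle: this is the standard equivalence-theorem argument, and the only points requiring care are the justification of the tangent inequality through the concavity of the scalar function $g$, and the verification that the direction $\M(\w^*)$ lies in $\Ss^m_+$ (so that \eqref{eq:dirderPhip} is applicable) while carrying a strictly positive criterion value (so that the final division is legitimate). Both are guaranteed by the structural facts already established in Section~\ref{sec:problem}.
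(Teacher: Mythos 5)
Your proof is correct and follows essentially the same route as the paper's: both derive the tangent inequality $\Phi_p[\M(\w^*)] - \Phi_p[\M(\w)] \leq \partial\Phi_p[\M(\w),\M(\w^*)]$ from concavity (the paper via the monotone difference quotient $\Psi_{p,\M,\N}(\beta)$, you via the one-dimensional tangent bound, which is the same fact), then substitute \eqref{eq:dirderPhip} and bound $\mathrm{tr}[\M^{-p-1}(\w)\M(\w^*)] = \sum_i w_i^*(\mathbf{g}_p(\w))_i$ by its maximal component. No gaps; the points you flag for care (nonsingularity of $\M(\w^*)$ and positivity of $\Phi_p$ there) are exactly the facts the paper establishes in Section~\ref{sec:problem}.
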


Using the directional derivatives, it is also straightforward to prove the so-called equivalence theorem for the multi-response optimal design problem (cf. \cite{kiefer1974}, Section 5): A nonsingular design $\w$ is optimal if and only if
\begin{equation}
\max_{i=1,\ldots,N} (\mathbf{g}_p(\w))_i  = \text{tr}\left[\M^{-p}(\w)\right].\label{zastavvetaoekv}
\end{equation}

Note that \eqref{eq:effbnd} aligns with its single-response counterpart used in the REX algorithm: In \cite{rex}, the $D$- and $A$-efficiency of a nonsingular design $\w$ is bounded by
\begin{eqnarray}
    \mathrm{eff}_D(\w) \geq \frac{m}{\max_{i=1,\dots,N} (\g(\w))_i}, \label{eq:effbnd_single_D} \\
    \mathrm{eff}_A(\w) \geq \frac{\text{tr}\left( \M^{-1}(\w) \right)}{\max_{i=1,\dots,N} (\g(\w))_i}, \label{eq:effbnd_single_A}
\end{eqnarray}
where 
\begin{equation}
    (\g(\w))_i = \begin{cases}
        \f^T(\x_i) \M^{-1}(\w) \f(\x_i) \quad \text{ for $D$-optimality,} \\
        \f^T(\x_i) \M^{-2}(\w) \f(\x_i) \quad \text{ for $A$-optimality.}
    \end{cases} \label{eq:gwDA}
\end{equation}

Various theoretical properties of the multi-response design problem for particular $\Phi_p$-criteria are given by \cite{ht09} for $D$-optimality, and by \cite{wong2018} for $D$- and $A$-optimality.

\section{The randomized exchange algorithm for multi-response optimality}\label{sec:mREX}

\begin{algorithm}[ht!]
\caption{Randomized exchange algorithm for single-response models (REX) adapted from \cite{rex}: The value of $\mathrm{eff.bnd}(\w)$ (Line 2) is the lower bound on the efficiency of the current design $\w$, as given by the right-hand side of \eqref{eq:effbnd_single_D} or \eqref{eq:effbnd_single_A}. The vector $\mathbf{g}(\w)$ (Line 4) is defined in equation \eqref{eq:gwDA} depending on whether we intend to compute $D$- or $A$-optimal designs.}
\label{REX}
\begin{algorithmic}[1]
    \Statex \textbf{Input:} The vectors $\f(\x_1),\ldots,\f(\x_N) \in \mathbb{R}^m$, threshold efficiency $\mathit{eff} \in (0,1)$, tuning parameter $\gamma \geq 1/m$.
    \Statex \textbf{Output:} Approximate design $\w$
    \Statex
    \State Generate a random regular design $\w$
    \While {$\mathrm{eff.bnd}(\w) < \mathit{eff}$}
        \State Let $(i^s_1,\ldots,i^s_K)$ be a random permutation of the $K$ indices of the support points of $\w$
        \State Let $(i^g_1,\ldots,i^g_L)$ be a random permutation of the indices of the $L = \min(\lceil \gamma m\rceil, N)$ largest elements of $\mathbf{g}(\w)$
        \For {$l \gets 1$ to $L$}
            \For {$k \gets 1$ to $K$}
                \State Set $\mathbf{\Delta} \leftarrow \f(\x_{i^g_l})\f^T(\x_{i^g_l})-\f(\x_{i^s_k})\f^T(\x_{i^s_k})$
                \State Find $\alpha^*$ in $\mathrm{argmax}\{\Phi[\M(\w)+\alpha\mathbf{\Delta}]): \alpha \in [-w_{i^g_l}, w_{i^s_k}]\}$
                \State Set $w_{i^g_l} \leftarrow w_{i^g_l} + \alpha^*$ ; $w_{i^s_k} \leftarrow w_{i^s_k} - \alpha^*$
            \EndFor
        \EndFor
    \EndWhile
\end{algorithmic}
\end{algorithm}

\begin{algorithm}[ht!]
\caption{Randomized exchange algorithm for multi-response models (mREX): The initial design (Line 1) is computed via the algorithm described in Section \ref{subsec:ini}. The value of $\mathrm{eff.bnd}_p(\w)$ (Line 2) is the lower bound on the efficiency of the current design $\w$, as given by the right-hand side of \eqref{eq:effbnd}. The vector $\mathbf{g}_p(\w)$ (Line 4) is defined in equation \eqref{eq:g}; see Section \ref{subsec:cand} for justification. The crucial inner optimization step (Line 8) is detailed in Section \ref{subsec:opt}.}
\label{mREX}
\begin{algorithmic}[1]
    \Statex \textbf{Input:} The $m \times s$ matrices $\G(\x_1),\ldots,\G(\x_N)$, parameter $p \in [0,\infty)$ determining the Kiefer's criterion, threshold efficiency $\mathit{eff} \in (0,1)$.
    \Statex \textbf{Output:} Approximate design $\w$
    \Statex
    \State Compute a nonsingular initial design $\w$
    \While {$\mathrm{eff.bnd}_p(\w) < \mathit{eff}$}
        \State Let $(i^s_1,\ldots,i^s_K)$ be a random permutation of the $K$ indices of the support points of $\w$
        \State Let $(i^g_1,\ldots,i^g_L)$ be a random permutation of the indices of the $L=\min(m,N)$ largest elements of $\mathbf{g}_p(\w)$
        \For {$l \gets 1$ to $L$}
            \For {$k \gets 1$ to $K$}
                \State Set $\mathbf{\Delta} \leftarrow \G(\x_{i^g_l})\G^T(\x_{i^g_l})-\G(\x_{i^s_k})\G^T(\x_{i^s_k})$
                \State Find $\alpha^*$ in $\mathrm{argmax}\{\Phi_p[\M(\w)+\alpha\mathbf{\Delta}]): \alpha \in [-w_{i^g_l}, w_{i^s_k}]\}$
                \State Set $w_{i^g_l} \leftarrow w_{i^g_l} + \alpha^*$ ; $w_{i^s_k} \leftarrow w_{i^s_k} - \alpha^*$
            \EndFor
        \EndFor
    \EndWhile
\end{algorithmic}
\end{algorithm}

In this section, we detail our randomized exchange algorithm mREX. The original REX algorithm for a single-response model (Algorithm \ref{REX}) repeatedly updates the current design $\w$ by performing optimal (full or partial) exchanges of weights for pairs of points: one point $\x_{i^s}$ at which $\w$ is currently supported, and the other point $\x_{i^g}$ at which the so-called variance function is large, as such points are more promising to support an optimal design. In each iteration of the algorithm REX, the points $\x_{i^s}$ and $\x_{i^g}$ are selected in batches, and the sequence of pairings for optimal exchanges is randomized. 

For our algorithm mREX (Algorithm \ref{mREX}), we generalize these steps to the multi-response settings, adding several new improvements. First, the original REX is initialized with a uniformly random $m$-point design (Line 1 of Algorithm \ref{REX}), possibly repeatedly, until the randomly generated design is nonsingular. For mREX, we provide a more efficient, nontrivial initialization (Line 1 of Algorithm \ref{mREX}, detailed in Section \ref{subsec:ini}), the importance of which is even more pronounced for the multi-response optimal design problem, as there an optimal design is typically supported on fewer points than $m$. Second, we provide a method for selecting the design points for the exchanges (Line 4 of Algorithm \ref{mREX}) that covers all Kiefer's $\Phi_p$-criteria, not just $D$- and $A$-optimality as in REX. 
Third, while the optimal exchanges in REX (Line 8 of Algorithm \ref{REX}) are determined using analytic formulas, an explicit analytic solution is not available in the multi-response setting. Therefore, for mREX, we propose a computationally effective numerical approach (Line 8 of Algorithm \ref{mREX}, detailed in Section \ref{subsec:opt}). Specifically, for the criterion of $D$-optimality, we reformulate the computation of optimal exchanges as finding the maximum of a fixed polynomial, which can be computed particularly efficiently.
Note that the selection of candidate points, as well as the randomized-exchange approach itself, is generalized in a straightforward manner, while the approaches to computing the initial design and performing optimal exchanges between design points represent nontrivial extensions of the original REX algorithm.

In the original REX algorithm, a tuning parameter $\gamma \geq 1/m$ controls the number of candidate points to exchange. More precisely, $L = \min(\lceil \gamma m\rceil, N)$ points are selected. We simplify the REX algorithm by setting $\gamma = 1$ and additionally omitting the so-called leading Böhning exchange, which was required for the convergence proof of the REX algorithm. Our numerical experience suggests that even this simplified version converges rapidly to a $\Phi_p$-optimal design.

\subsection{Initial design}\label{subsec:ini}

A strategic choice of initial design can significantly improve the overall performance of an optimal design algorithm. This becomes even more pronounced in the multi-response problem, where the size of the support of an efficient initial design can be much smaller than $m$, leading to greatly improved computational speed. However, to our knowledge, the only advanced approach to the initialization of optimal design algorithms for multi-response models is given by \cite{chang1997}, where the initial design is formed as the uniform design on the union of supports of optimal designs for single-response models. The initialization method by \cite{chang1997} requires solving multiple optimal design problems and may lead to unnecessarily large supports. Other algorithms either use initial designs supported at $m$ random points (e.g., \cite{ybt}), or are uniform on the entire design space (e.g., \cite{yu2010}).

In contrast, more work has been done on the construction of efficient initial designs for single-response models, with state-of-the-art methods being greedy heuristics (see \cite{GalilKiefer} and \cite{PIN}, cf. Section 12.6 in \cite{AtkinsonEA07}). These methods show good theoretical and practical performance; we therefore propose an initialization method for multi-response models inspired by the single-response approaches: our method, mKYM, is a multi-response generalization of the Kumar-Y{\i}ld{\i}r{\i}m algorithm (KYM) from \cite{PIN}, cf.\ \cite{KumarYildirim}.
\bigskip

The original KYM first constructs an $m$-element set $\Ss$ of design point indices as follows: it starts with $\Ss=\emptyset$ and, in each of $m$ iterations, adds the index
\begin{equation}\label{eq:KYMmax}
    i^* \in \arg\max_{i\not\in\Ss} \vert \vb^T \f(\x_i) \vert
\end{equation}
to $\Ss$. In equation \eqref{eq:KYMmax}, $\vb$ is a random direction in $\C^\perp(\M(\Ss))$, where $\M(\emptyset) = \0_{m \times m}$ and $\M(\Ss)=\sum_{i \in \Ss} \f(\x_i)\f^T(\x_i)$ for $\Ss \neq \emptyset$. Geometrically, this corresponds to selecting a design point $\x_{i^*}$ such that $\f(\x_{i^*})$ is largest in a random direction in the subspace orthogonal to the already selected $\{\f(\x_i): i \in \Ss\}$. 
The final output of the original KYM is a uniform design on $\{\x_i: i \in \Ss\}$; see \cite{PIN} for details.
\bigskip

 We propose the multi-response version mKYM as follows; cf. Algorithm \ref{aINI}. The method starts with $\Ss=\emptyset$ and, in each iteration, adds
 \begin{equation}\label{eq:mKYMmax}
     i^* \in \arg\max_{i\not\in\Ss} \Vert \vb^T \G(\x_i) \Vert
 \end{equation}
 to $\Ss$. In equation \eqref{eq:mKYMmax}, $\vb$ is a random direction in $\C^\perp(\M(\Ss))$, where $\M(\emptyset) = \0_{m \times m}$, $\M(\Ss)=\sum_{i \in \Ss} \G(\x_i)\G^T(\x_i)$ for $\Ss \neq \emptyset$, and $\lVert \cdot \rVert$ is the usual Euclidean norm. This can be implemented by iteratively updating the orthogonal projector $\Pb$ onto $\C^\perp(\M(\Ss))$. The algorithm terminates once the column space of $\M(\Ss)$ captures all $m$-dimensional vectors. This is equivalent to $\C^\perp(\M(\Ss)) = \{\0_m\}$, i.e., $\tr\left[\Pb\right] = \rank\left[\Pb\right] = 0$, which can be verified in a numerically stable manner as $\tr\left[\Pb\right] < 0.5$. The final output of the mKYM is the design uniform on $\{\x_i: i \in \Ss\}$. The following theorem proves that the resulting design is indeed nonsingular and has support of size at most $m$.
\bigskip


\begin{algorithm}
\caption{The multi-response Kumar-Y{\i}ld{\i}r{\i}m initiation (mKYM): see the description in Section \ref{subsec:ini} for details.}
\label{aINI}
\begin{algorithmic}[1]
    \Statex \textbf{Input:} The $m \times s$ matrices $\G(\x_1),\ldots,\G(\x_N)$
    \Statex \textbf{Output:} Approximate design $\w$
    \Statex
    \State Set $\Ss \gets \emptyset$ and $\Pb \gets \I_m$
    \Repeat
        \State Choose $\vb \in \C(\Pb)$ such that $\vb \neq \0_m$
        \State Choose $i^*$ in $\mathrm{argmax}_{i \in \{1,\ldots,N\} \setminus \Ss} \lVert \vb^T \G(\x_i) \rVert^2$
        \State Set $\Ss \gets \Ss \cup \{i^*\}$
        \If{$|\Ss|=m$} \State \textbf{break} \EndIf
        \State Set $\widetilde{\G} \gets \Pb \G(\x_{i^*})$
        \State Set $\Pb \gets \Pb - \widetilde{\G}\widetilde{\G}^+$
    \Until{$\tr\left[\Pb\right] < 0.5$}
    \State Set $\w$ to be the uniform design on $\{\x_i: i \in \Ss\}$
\end{algorithmic}
\end{algorithm}

\begin{theorem}[]\label{thm2}
    The algorithm mKYM (Algorithm \ref{aINI}) finds a set $\Ss$ of size at most $m$, such that the uniform design $\w$ supported on $\{\x_i: i \in \Ss\}$ is nonsingular.
\end{theorem}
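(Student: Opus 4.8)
The plan is to prove three things in turn: that the loop is well defined and terminates, that $|\Ss|\le m$, and that $\M(\Ss):=\sum_{i\in\Ss}\G(\x_i)\G^T(\x_i)$ has full rank $m$ upon termination. The last of these is equivalent to nonsingularity of the output design, whose information matrix is $|\Ss|^{-1}\M(\Ss)$. The backbone of the argument is the loop invariant that, at the start of every pass, $\Pb$ equals the orthogonal projector $\Pb_{\C^\perp(\M(\Ss))}$. This holds at initialization because $\M(\emptyset)=\0_{m\times m}$ and $\Pb=\I_m=\Pb_{\R^m}$.

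The main obstacle is verifying that the update $\Pb\gets\Pb-\widetilde{\G}\widetilde{\G}^+$ preserves this invariant when $i^*$ is appended to $\Ss$. First I would record two standard facts: for positive semidefinite $\A,\B$ one has $\C(\A+\B)=\C(\A)+\C(\B)$, and for any matrix $\A$ the product $\A\A^+$ is the orthogonal projector onto $\C(\A)$. Writing $\G:=\G(\x_{i^*})$ and $V:=\C^\perp(\M(\Ss))$ (the range of the current $\Pb$), the first fact gives
\begin{equation*}
    \C^\perp\big(\M(\Ss)+\G\G^T\big)=\big(\C(\M(\Ss))+\C(\G)\big)^\perp=V\cap\C(\G)^\perp.
\end{equation*}
The key linear-algebra step is then that, since $\Pb$ is symmetric and fixes $V$ pointwise, a vector $\x\in V$ is orthogonal to $\C(\G)$ if and only if it is orthogonal to $\C(\widetilde{\G})$ with $\widetilde{\G}=\Pb\G$; hence $V\cap\C(\G)^\perp=V\ominus\C(\widetilde{\G})$, and because $\C(\widetilde{\G})\subseteq V$ the projector onto this space is exactly $\Pb-\widetilde{\G}\widetilde{\G}^+$. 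This confirms the invariant and shows $\Pb$ remains a projector, so $\tr[\Pb]=\rank[\Pb]=m-\rank[\M(\Ss)]$ is a nonnegative integer, which justifies the numerically stable stopping test $\tr[\Pb]<0.5$.

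Next I would show that each accepted index strictly raises the rank. At the start of a pass $\vb\in\C(\Pb)=\C^\perp(\M(\Ss))$ is nonzero, and I claim the maximum of $\Vert\vb^T\G(\x_i)\Vert^2$ over $i\notin\Ss$ is strictly positive. Suppose not, so $\vb^T\G(\x_i)=\0$ for every $i\notin\Ss$; for $i\in\Ss$ we have $\C(\G(\x_i))\subseteq\C(\M(\Ss))$, which is orthogonal to $\vb$, so $\vb^T\G(\x_i)=\0$ there as well. Then $\vb\perp\C(\G(\x_1),\ldots,\G(\x_N))=\R^m$ by assumption \eqref{eq:basic}, forcing $\vb=\0$, a contradiction. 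Hence $\vb^T\G(\x_{i^*})\neq\0$ for the chosen $i^*$, so some column of $\G(\x_{i^*})$ has a nonzero component along $\vb\in\C^\perp(\M(\Ss))$ and therefore lies outside $\C(\M(\Ss))$; consequently $\C(\M(\Ss\cup\{i^*\}))\supsetneq\C(\M(\Ss))$ and $\rank[\M(\Ss)]$ increases by at least one.

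Finally I would assemble these observations. Because the rank strictly increases at each pass and is bounded above by $m$, it reaches $m$ after at most $m$ passes; hence the loop executes at most $m$ times, $|\Ss|\le m$, and the procedure terminates. At termination either $|\Ss|=m$, in which case $\rank[\M(\Ss)]\ge|\Ss|=m$, or $\tr[\Pb]<0.5$, in which case $\tr[\Pb]=0$ (a nonnegative integer below $0.5$) and so $\rank[\M(\Ss)]=m-\tr[\Pb]=m$. In both cases $\M(\Ss)$ has full rank, so the uniform design on $\{\x_i:i\in\Ss\}$ is nonsingular. The well-definedness of the choice of $\vb$ at each pass follows from the same bookkeeping, since a fresh pass is entered only when $\tr[\Pb]\ge0.5$, i.e.\ when $\C(\Pb)\neq\{\0_m\}$.
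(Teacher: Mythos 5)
Your proof is correct and follows essentially the same route as the paper's: the loop invariant that $\Pb$ is the orthogonal projector onto $\C^\perp(\M(\Ss))$, a strict rank increase (equivalently, unit trace decrease) per pass derived from the norm-maximizing selection together with assumption \eqref{eq:basic}, and termination within $m$ passes. The only cosmetic differences are that you verify the invariant by intersecting orthogonal complements whereas the paper decomposes $\C(\M(\Ss \cup \{i^*\}))$ as an orthogonal direct sum, and that you make explicit the case $i \in \Ss$ in the contradiction argument, which the paper leaves implicit.
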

\bigskip
In the actual implementation of Algorithm \ref{aINI}, $\vb$ in Line 3 is obtained as $\Pb \z$, $\z \sim \mathcal{N}_m(\0_m, \I_m)$, which gives $\vb \in \C(\Pb)$, $\vb \neq \0_m$, with probability one. In case of ties in Line 4, we take the first index maximizing $\lVert \vb^T \G(\x_i) \rVert^2$. The squared norms in Line 4 are computed by multiplying $\vb^T(\G(\x_1), \ldots, \G(\x_N))$, squaring all the elements of the resulting vector and then appropriately splitting the elements into smaller sums.

\subsection{Candidate points}\label{subsec:cand}

For a current design $\w$, we select the most promising candidates for new support points in accord with the directional derivatives of $\Phi_p$, evaluated in $\M(\w)$ and directions $\Hb(\x_1),\ldots,\Hb(\x_N)$. The reason is that they determine the design points that lead to the largest local increase in the optimality criterion $\Phi_p$. Note that the directional derivatives can be computed by explicit formulas (see \eqref{eq:dirderPhip} in Section \ref{sec:problem}):
\begin{equation*}
    \partial \Phi_p \left[\M(\w), \Hb(\x_i)\right] = \Phi_p \left[\M(\w)\right] \left( \frac{(\mathbf{g}_p(\w))_i}{\text{tr}\left[\M^{-p}(\w)\right]} -1 \right),
\end{equation*}
for $i=1,\ldots,N$, where $\Phi_p \left[\M(\w)\right]$ and $\text{tr}\left[\M^{-p}(\w)\right]$ are positive numbers, constant with respect to $i$. That is, the ordering of the design points given by $\partial \Phi_p \left[\M(\w), \Hb(\x_i)\right]$, $i=1,\ldots,N$, is the same as the ordering given by the components of $\mathbf{g}_p(\w)$, as defined in equation \eqref{eq:g}. This explains Line 4 of Algorithm \ref{mREX}. In fact, it is straightforward to verify that this generalizes the choice of the candidate support points in the original REX algorithm for the single-response problem and criteria of $D$- and $A$-optimality. 

In the implementation of mREX, we efficiently compute $\g_p(\w)$ by using the Cholesky decomposition $\M^{-p-1}(\w)=\Lb\Lb^T$. Then instead of \eqref{eq:g}, we can use
\begin{equation}
    (\mathbf{g}_p(\w))_i = \sum_{j=1}^s\sum_{l=1}^m \left[\textbf{G}^T(\x_i)\textbf{L}\right]_{jl}^2\label{gxM_GGT}
\end{equation} 
for $i=1,\ldots,N$. Evaluating the sums of squares on the right-hand-side of \eqref{gxM_GGT} using available numerical matrix procedures can be much faster than computing the trace of $\G^T(\x_i)\M^{-p-1}(\w)\G(\x_i)$, especially because it can be performed in a vectorized manner for all $i=1,\ldots,N$ simultaneously (i.e., by avoiding explicit loops).

\subsection{Optimal exchanges}\label{subsec:opt}

In Line 8 of mREX, we solve the single-response optimization problem of the form 
\begin{equation}\label{ulohaoptkrokvseob}
\alpha^* \in \mathrm{argmax}_{\alpha \in [\ell, u]} \Phi_p\left[\M + \alpha \mathbf{\Delta}\right].
\end{equation}

Here, $\ell:=-w_{i^g_l} \leq 0$ and $u:=w_{i^s_k}>0$ are the boundary points of the optimization interval, $\M=\M(\w)$ is a positive definite $m \times m$ matrix and $\mathbf{\Delta}:= \Hb(\x_{i^g_l})-\Hb(\x_{i^s_k})$ is an $m \times m$ matrix. During computation, the optimization problem \eqref{ulohaoptkrokvseob} is solved many times; therefore, its rapid solution is key to the performance of the algorithm. 

For the single-response case, in which we always have $\rank(\mathbf{\Delta}) \leq 2$, and for the criteria of $D$- and $A$-optimality, there exists an analytic solution of \eqref{ulohaoptkrokvseob}; see \cite{bohning} for $D$-optimality and \cite{rex} for $A$-optimality. Recently, the analytic formulas for $D$-optimality have been extended to the case $\rank(\mathbf{\Delta})=3$; see \cite{ponte2023}. Note that in Section \ref{subsub:Dopt} we show that for $D$-optimality, a closed-form algebraic expression for $\alpha^*$ can in principle be derived if $\rank(\mathbf{\Delta}) \leq 5$. However, the general multi-response case does not seem to lend itself to an analytical solution of \eqref{ulohaoptkrokvseob}, not even for the most common criteria. We therefore examine efficient numerical solutions.

\subsubsection{General Kiefer's criteria}\label{subsubsec:kiefer}

The properties of $\Phi_p$ imply that the objective function
\begin{equation*}
 \phi(\alpha):=\Phi_p\left[\M + \alpha \mathbf{\Delta}\right]
\end{equation*}
of \eqref{ulohaoptkrokvseob} is non-negative, continuous and concave on $[\ell, u]$. The function $\phi$ is positive and smooth inside $(\ell,u)$. However, the derivative $\phi'(\ell\!\to\!u)$ at $\ell$ in the direction of $u$, or the derivative $\phi'(u\!\to\!\ell)$ at $u$ in the direction of $\ell$, can be infinite. 
\bigskip

An important empirical observation is that during the execution of mREX a solution of \eqref{ulohaoptkrokvseob} often corresponds to the full exchange of the weight of the design points with indices $i^g_l$ and $i^s_k$, i.e., the solutions are frequently $\alpha^*=\ell$ or $\alpha^*=u$. Therefore, it is beneficial to rapidly determine whether one of these two cases holds. Concavity of $\phi$ yields that if $\phi'(\ell\!\to\!u) \leq 0$ then $\alpha^*=\ell$ and if $\phi'(u\!\to\!\ell) \leq 0$ then $\alpha^*=u$. If we are able to efficiently verify these conditions, many weight exchanges will be performed rapidly. 

If $\alpha \in (\ell, u)$ then the sign of the standard derivative of $\phi$ in $\alpha$, denoted by $\phi'(\alpha)$, is the same as the sign of the directional derivative (cf. \eqref{eq:dirderPhip})
\begin{equation}\label{eq:dirderaux}
   \partial \Phi_p [\M + \alpha \mathbf{\Delta},  \M +  u\mathbf{\Delta}] =\\
   \Phi_p[\M + \alpha \mathbf{\Delta}]\left(\frac{\mathrm{tr}[(\M + \alpha \mathbf{\Delta})^{-p-1}(\M +  u \mathbf{\Delta})]}{\mathrm{tr}[(\M + \alpha \mathbf{\Delta})^{-p}]} - 1\right).
\end{equation}
Because $\Phi_p[\M + \alpha \mathbf{\Delta}]>0$ and $\mathrm{tr}[(\M + \alpha \mathbf{\Delta})^{-p}]>0$, it is straightforward to verify that the sign of \eqref{eq:dirderaux} is the same as the sign of $\mathrm{tr}[(\M + \alpha \mathbf{\Delta})^{-p-1}\mathbf{\Delta}]$. Note that for $D$-optimality this result also follows from the Jacobi's formula (see Section 8.3 in \cite{magnusneudecker}).

Clearly, if $\M+\ell\mathbf{\Delta}$ is singular, then $\phi(\ell)=0$, that is, $\alpha^* \neq \ell$. Similarly, if $\M+u\mathbf{\Delta}$ is singular, then $\phi(u)=0$, and $\alpha^* \neq u$. Provided that $\M+\ell \mathbf{\Delta}$ is nonsingular, the sign of $\phi'(\ell\!\to\!u)$ is finite and the same as the sign of $\mathrm{tr}[(\M + \ell \mathbf{\Delta})^{-p-1}\mathbf{\Delta}]$. Similarly, if $\M + u \mathbf{\Delta}$ is nonsingular, then $\phi'(u\!\to\!\ell)$ is finite and its sign is opposite to the sign of $\mathrm{tr}[(\M + u \mathbf{\Delta})^{-p-1}\mathbf{\Delta}]$. Combining these ideas allows us to check whether $\phi'(\ell\!\to\!u) \leq 0$ or $\phi'(u\!\to\!\ell) \leq 0$ and, if some of these conditions holds, determine the $\alpha^*$ as either $\ell$ or $u$.
\bigskip

The approach outlined above requires evaluation of $\phi$ and calculation of directional derivatives at $\ell$ or $u$. In our implementation we opt for an approximate but more efficient numerical approach: We choose a small tolerance $\varepsilon>0$ and 
\begin{enumerate}
    \item Find a numerical estimate $\widetilde{\phi'}(\ell\!\to\!u) = (\phi(\ell + 2\varepsilon) - \phi(\ell + \varepsilon)) / \varepsilon$ of $\phi'(\ell\!\to\!u)$ based on the values of $\phi$ in $\ell+\varepsilon$ and $\ell+2\varepsilon$; note that both $\M+(\ell+\varepsilon)\mathbf{\Delta}$ and $\M+(\ell+2\varepsilon)\mathbf{\Delta}$ are nonsingular, and the evaluation of $\phi(\ell+\varepsilon)$ and $\phi(\ell+2\varepsilon)$ is straightforward.
    \item If $\widetilde{\phi'}(\ell\!\to\!u) \leq 0$ then, due to concavity of $\phi$, $\alpha^* \in [\ell, \ell+2\varepsilon)$, therefore we can set $\alpha^*=\ell$.
    \item If $\widetilde{\phi'}(\ell\!\to\!u)>0$ we find a numerical estimate $\widetilde{\phi'}(u\!\to\!\ell)$ of $\phi'(u\!\to\!\ell)$ based on the values of $\phi$ in $u-\varepsilon$ and $u-2\varepsilon$; similarly as above, note that both $\M+(u-\varepsilon)\mathbf{\Delta}$ and $\M+(u-2\varepsilon)\mathbf{\Delta}$ are nonsingular, therefore the evaluation of $\phi(u-\varepsilon)$ and $\phi(u-2\varepsilon)$ is straightforward.
    \item If $\widetilde{\phi'}(u\!\to\!\ell) \leq 0$  then, due to concavity of $\phi$, $\alpha^* \in (u-2\varepsilon,u]$, therefore we can set $\alpha^*=u$.
    \item If neither of the above conditions is satisfied, we use a one-dimensional optimization procedure to find the maximum of $\phi$ on $[\ell+2\varepsilon, u-2\varepsilon]$. Similarly as above, for each $\alpha \in [\ell+2\varepsilon, u-2\varepsilon]$ the matrix $\M + \alpha \Delta$ is nonsingular, therefore the optimization procedure does not encounter singularities. Moreover, $\phi$ is positive, concave and smooth at the optimization interval. Therefore, we can find the optimum $\alpha^*$ rapidly via a univariate optimization procedure (such as the \texttt{base} function \texttt{optimize} in \texttt{R}).
\end{enumerate}

In the last step above, we can alternatively find $\alpha^*$ as a solution of $\mathrm{tr}[(\M + \alpha^* \mathbf{\Delta})^{-p-1}\mathbf{\Delta}]=0$ over $[\ell+2\varepsilon, u-2\varepsilon]$, utilizing a univariate root-solving algorithm (such the \texttt{base} function \texttt{uniroot} in \texttt{R}). However, our experience suggests that finding the root of the derivative is somewhat less efficient than computing the optimum directly, which we ultimately chose for our implementation.

\subsubsection{D-optimality}\label{subsub:Dopt}

For $D$-optimality, as the most important Kiefer's criterion, we provide a more  efficient approach than the general one from Section \ref{subsubsec:kiefer}.
For this criterion, the problem \eqref{ulohaoptkrokvseob} is equivalent to maximizing the determinant of $\M + \alpha \mathbf{\Delta}$ over $\alpha \in [\ell,u]$. Let $\Q:=-\M^{-1}\mathbf{\Delta}$. The key observation is that for any real $\alpha \neq 0$
\begin{eqnarray*}
    \det\left[\M + \alpha \mathbf{\Delta}\right] &=& \det\left[\M\right]\det\left[\mathbf{I}_m - \alpha \Q\right]\\
    &=& \alpha^m \det\left[\M\right]\det\left[\alpha^{-1}\mathbf{I}_m - \Q\right]=\alpha^m \det\left[\M\right] P_{\Q}(\alpha^{-1}),
\end{eqnarray*}
where
\begin{eqnarray*}
    P_{\Q}(\lambda) &=& c_0 + c_1 \lambda + \cdots + c_m \lambda^m\\ 
    &=& c_{m-r}\lambda^{m-r}+c_{m-r+1}\lambda^{m-r+1}+ \cdots + c_m \lambda^m
\end{eqnarray*}
is the characteristic polynomial of $\Q$ and $r$ is the rank of $\Q$. We obtained
\begin{equation}\label{eq:Dpoly}
    \det\left[\M^{-1}\right]\det\left[\M + \alpha \mathbf{\Delta}\right] = c_{m-r}\alpha^r + c_{m-r+1} \alpha^{r-1} + \cdots + c_m.
\end{equation}
Note that the above equation also holds for $\alpha=0$, since $c_m=1$. Therefore, for $D$-optimality, the solution $\alpha^*$ of the optimization problem $\eqref{ulohaoptkrokvseob}$ is the maximum of the polynomial on the right-hand side of \eqref{eq:Dpoly} over the interval $[\ell,u]$. 

Because the coefficients of $P_\Q$ can be expressed as functions of the elements of $\Q$ and the quartic polynomial equation can be solved by radicals in the general case, it is theoretically possible to write down explicit formulas for the solution of \eqref{ulohaoptkrokvseob} under $D$-optimality, provided that $r \leq 5$. This covers the case of a regression model with a two-dimensional response at each design point. However, beyond the classical case of $r=2$, the formulas become overcomplicated, and probably not worth implementing.

Rather, we propose a simpler and completely general approach: we first compute the coefficients of the characteristic polynomial $P_{\Q}$ and then numerically maximize the polynomial on the right-hand side of \eqref{eq:Dpoly} over $[\ell,u]$. Numerical maximization of a fixed polynomial is generally much faster and more reliable than repeated evaluation of determinants. Importantly, the characteristic polynomial of a matrix can be computed using the Faddeev-LeVerrier method, implemented for instance as a function \texttt{charpoly} of the \texttt{R} library \texttt{pracma}.

The complexity of the Faddeev-LeVerrier method applied directly to the $m \times m$ matrix $\Q$ is at least $O(m^3)$ (e.g., \cite{bar2021}). Therefore, for larger $m$, the computational cost can be substantial. However, if $m$ is large, we typically have $2s < m$, and we can achieve significant numerical simplification by using the following approach. The generalized matrix determinant lemma (see Theorem 18.1.1 in \cite{harville}) implies for any $\alpha \in [\ell,u]$ and indices $i^g_l$, $i^s_k$
\begin{eqnarray*}
    \det\left[\M + \alpha \mathbf{\Delta} \right] &=& \det\left[\M + \alpha (\G(\x_{i^g_l})\G^T(\x_{i^g_l})-\G(\x_{i^s_k})\G^T(\x_{i^s_k})\right]\\
    &=& \det\left[\M + \alpha(\G(\x_{i^g_l}),\G(\x_{i^s_k}))(\G(\x_{i^g_l}), -\G(\x_{i^s_k}))^T\right]\\
    &=& \det\left[\M\right]\det\left[\I_{2s} + \alpha (\G(\x_{i^g_l}), -\G(\x_{i^s_k}))^T \M^{-1}(\G(\x_{i^g_l}),\G(\x_{i^s_k}))\right].
\end{eqnarray*}
Consequently, to obtain an optimal solution $\alpha^*$ of \eqref{ulohaoptkrokvseob} for $p=0$, we can solve the same optimization problem, but with $\widetilde{\M}=\I_{2s}$ instead of $\M$ and with the $2s \times 2s$ matrix 
\begin{equation*}
    \widetilde{\mathbf{\Delta}} = (\G(\x_{i^g_l}), -\G(\x_{i^s_k}))^T \M^{-1}(\G(\x_{i^g_l}),\G(\x_{i^s_k}))
\end{equation*}
instead of $\mathbf{\Delta}$. This means that for the characteristic polynomial approach, we only need to apply the Faddeev-LeVerrier method to a $2s \times 2s$ matrix; in a typical application, e.g., with a binary response ($s=2$), the computation of characteristic polynomials for such small matrices is extremely rapid, independently of the number $m$ of parameters.

\section{Extensions}\label{sec:further}

This section discusses extensions of the optimal design problem that the mREX algorithm can directly address.

\subsection{Variable response dimension and variable covariance matrices}

While the linear regression model in \eqref{eq:model_multi} assumes observations with fixed dimension $s$ and a constant covariance matrix, the mREX algorithm can be applied to scenarios where both the response dimensions and response covariance matrices vary (cf. Section II.5.3 in \cite{pazman86}). More formally, we can consider the model
\begin{equation}\label{eq:multires}
    \y_\x = \F^T(\x) \boldsymbol{\beta} + \boldsymbol{\epsilon}_\x, 
\end{equation}
where $\y_\x$ is the $s(\x)$-dimensional response at the design point $\x \in \mathcal{X}$, $\boldsymbol{\beta} \in \mathbb{R}^m$ is the vector of parameters, and $\F(\x)$ is a known matrix of size $m \times s(\x)$. The vectors $\boldsymbol{\epsilon}_\x$ are assumed to be $\mathcal{N}_{s(\x)}(\mathbf{0}_{s(\x)}, \sigma^2\boldsymbol{\Sigma}(\x))$ distributed and uncorrelated across different trials, with positive definite covariance matrices $\boldsymbol{\Sigma}(\x)$ of dimension $s(\x) \times s(\x)$. In this model, the information matrix corresponding to a design $\w$ is 
\begin{equation*}
    \M(\w) = \sum_{i=1}^N w_i \F(\x_i) \boldsymbol{\Sigma}^{-1}(\x_i) \F^T(\x_i) = \sum_{i=1}^N w_i \Hb(\x_i),
\end{equation*}
where the elementary information matrix of the design point $\x_i \in \mathcal{X}$ is 
\begin{equation*}
        \Hb(\x_i) =  \G(\x_i)\textbf{G}^T(\x_i), \:\: \G(\x_i) = \F(\x_i)\boldsymbol{\Sigma}^{-1/2}(\x_i).
\end{equation*}

\subsection{Nonlinear and generalized linear regression models}\label{subsec:nonlinear}

Our computational approach also naturally extends to locally optimal designs for nonlinear multi-response regression models of the form
\begin{equation}\label{eq:model_multi2}
    \y_\x = \F(\x, \boldsymbol{\beta}) + \boldsymbol{\epsilon}_\x,
\end{equation}
where $\F:\mathcal{X} \times B \to \R^s$ is a nonlinear mean-value function, differentiable in the interior $B^\circ \neq \emptyset$ of $B \subseteq \R^m$, and all other aspects are the same as for the standard model \eqref{eq:model_multi}.

The optimal design in nonlinear models typically depends on the value of the unknown parameters, and a nominal value $\boldsymbol{\beta}^* \in B^\circ$ for the unknown parameters is usually considered. In this case, if the function $\F$ satisfies some general regularity assumptions, the approach of locally optimal design leads to elementary information matrices $\Hb(\x_i, \boldsymbol{\beta}^*)$ and matrices $\textbf{G}(\x_i,\boldsymbol{\beta}^*)$ of the form
\begin{equation}\label{eq:EIM_nonlinear}
        \Hb(\x_i, \boldsymbol{\beta}^*) =  \textbf{G}(\x_i,\boldsymbol{\beta}^*)\textbf{G}^T(\x_i,\boldsymbol{\beta}^*), \:\: \textbf{G}(\x_i,\boldsymbol{\beta}^*) = \frac{\partial \F(\x_i, \boldsymbol{\beta}^*)}{\partial \boldsymbol{\beta}} \boldsymbol{\Sigma}^{-1/2},
\end{equation}
and the corresponding $\Phi$-optimal designs are called local $\Phi$-optimal designs (see \cite{chernoff}, or \cite{pronzatopazman2013}, Chapter 5).
\bigskip

The approach of local optimality can similarly be used for generalized linear models (e.g., \cite{khuri}, \cite{AtkinsonWoods}). The expected value vector of $\y_\x$ in a generalized linear model is $(g_1^{-1}(\h_1^T(\x)\boldsymbol{\beta}), \ldots, g_s^{-1}(\h_s^T(\x)\boldsymbol{\beta}))^T$ for strictly monotone differentiable link functions $g_1, \ldots, g_s$ and known regressors $\h_1(\x), \ldots, \h_s(\x)$. For a nominal value $\boldsymbol{\beta}^*$, we have
$$
        \Hb(\x_i, \boldsymbol{\beta}^*) =  \textbf{G}(\x_i,\boldsymbol{\beta}^*)\textbf{G}^T(\x_i,\boldsymbol{\beta}^*),
$$
where
$$
        \textbf{G}(\x_i,\boldsymbol{\beta}^*)  = \begin{bmatrix}
        \sqrt{v_1(\x_i, \boldsymbol{\beta}^*)}\h_1(\x_i), & \ldots, & 
        \sqrt{v_s(\x_i, \boldsymbol{\beta}^*)}\h_s(\x_i)
        \end{bmatrix} \boldsymbol{\Sigma}^{-1/2},
$$
for known functions $v_1, \ldots, v_s$ that correspond to the distributions of the elements of $\y_\x$ and to the links $g_1, \ldots, g_s$ (for some common functions $v$, see, e.g., \cite{gex}).

\section{Application: dose-response studies}\label{sec:examples}

We demonstrate the usefulness of our algorithm and compare its performance to competitors by applying it to a problem related to designing dose-finding studies. All calculations were performed on a computer with a 64-bit Windows 11 operating system running an AMD Ryzen 7 5800H CPU processor at $3.20$ GHz with $16$ GB RAM. All codes in software \texttt{R} are available upon request from the authors.

The primary objective of dose-response studies is to establish a balance between the therapeutic benefits of a drug and the potential risks associated with different dosage levels (see, e.g., \cite{ting2006}, \cite{fedorovleonov2013}, \cite{magnusdottir2016}), that is, we often measure a bivariate response corresponding to a level of efficacy and toxicity. However, the multi-response regression can be used to model any collection of possibly correlated measurements as a multivariate response to a given dose of a drug, not only responses corresponding to efficacy and toxicity.

\subsection{Bivariate Emax model}\label{ssEmax}
We first consider a bivariate Emax model, as in \cite{magnusdottir2016}:
\begin{equation}
			\begin{bmatrix}
				y_{x,1} \\
				y_{x,2}
			\end{bmatrix}
			= 
			\begin{bmatrix}
			f_1(x, \boldsymbol{\beta}_1) \\
			f_2(x, \boldsymbol{\beta}_2)
			\end{bmatrix}
			+
			\begin{bmatrix}
				\varepsilon_{x,1} \\
				\varepsilon_{x,2}
			\end{bmatrix}, \label{multivar_emax}
\end{equation}
where 
\begin{equation}\label{eEmaxF}
f_i(x, \boldsymbol{\beta}_i) = E_{0,i} + \frac{x \cdot E_{\mathrm{max},i}}{x + ED_{50,i}},
\end{equation}
$\boldsymbol{\beta}_i = (E_{0,i}, E_{\mathrm{max},i}, ED_{50,i})^T$ for $i = 1,2$, and the vector of all parameters is $\boldsymbol{\beta}=(\boldsymbol{\beta}^T_1,\boldsymbol{\beta}^T_2)^T$.
The design variable $x \in \left[a,b \right]$ represents the dose. The parameters $E_{0,i}$ are the placebo effects when the dose $x$ is zero and the parameters $E_{\mathrm{max},i}$ represent by how much the maximal achievable effects of the drug exceed the placebo effects. Furthermore, the expected response value at $x = ED_{50,i}$ is $f_i(ED_{50,i}, \boldsymbol{\beta}_i) = E_{0,i} + \frac{ E_{\mathrm{max},i}}{2}$, so the parameters $ED_{50,i}$ represent the dose levels at which the effects correspond to the average of the placebo effects and maximal achievable effects. As is commonly assumed (see \cite{magnusdottir2016}, \cite{schorning2017} or \cite{tsirpitzi_miller}), the covariance matrix $\boldsymbol{\Sigma}$ of $(\varepsilon_{x,1},\varepsilon_{x,2})^T$ is considered to be known. As \eqref{multivar_emax} is a nonlinear regression model, the local optimality approach leads to elementary information matrices given by \eqref{eq:EIM_nonlinear}, where $\F(x_i, \boldsymbol{\beta}^*) = \left(f_1(x_i, \boldsymbol{\beta}_1^*), f_2(x_i, \boldsymbol{\beta}_2^*)\right)^T$.
\bigskip

Local $D$-optimal minimally supported design for the Emax model \eqref{multivar_emax} over the continuous interval $[a,b]$ is given by \cite{schorning2017}. Here, ``$D$-optimal minimally supported design'' means that the design is $D$-optimal among all designs with the support size of $K=3$ points, which is the smallest support size of any nonsingular design. Note that such designs need not be $D$-optimal among all designs: it may happen that the overall $D$-optimal design is supported on more than three points. The design given by \cite{schorning2017} places weight $1/3$ on the following three points: $a$, $b$ and $x_M$, where 
\begin{equation}
    x_M = \frac{\sqrt{(a+ED_{50,1}^*)(a+ED_{50,2}^*)(b+ED_{50,1}^*)(b+ED_{50,2}^*)}+ab-ED_{50,1}^*ED_{50,2}^*}{a+b+ED_{50,1}^*ED_{50,2}^*}, \label{xM}
\end{equation}
and $ED_{50,1}^*$ and $ED_{50,2}^*$ denote the nominal values of the corresponding model parameters.

We consider the situation that $x \in \left[a,b\right] = \left[0,500 \right]$ and the nominal parameters are $E_{0,1}^* = E_{0,2}^* = 60, ED_{50, 1}^* = ED_{50, 2}^* = 25$ and $E_{\mathrm{max}, 1}^* = E_{\mathrm{max}, 2}^* = 294$. These parameter values correspond to the efficacy parameter values in the Emax dose-finding trial involving an anti-asthmatic drug presented in \cite{bretz2010}. The covariance matrix $\boldsymbol{\Sigma}$ used in the computations is 
\begin{equation}\label{eSigma}
\boldsymbol{\Sigma} = \begin{pmatrix}
  1 & 1/2 \\
  1/2 & 1
\end{pmatrix}.
\end{equation}
The $D$-optimal minimally supported design given by \eqref{xM} is then
\begin{equation}
\w_{0}^* =
\begin{pmatrix}
0 & 22.727 & 500 \\
\frac{1}{3} & \frac{1}{3} & \frac{1}{3} \\
\end{pmatrix}. \label{fixed_kirsten}
\end{equation}

The design $\w_{0}^*$ was constructed under several assumptions: specific values for the nominal parameters were used, the criterion was fixed as $D$-optimality, and the design is $D$-optimal only within the class of minimally supported designs. We therefore examine the quality of $\w_{0}^*$ when some of these settings change, which can be done by employing the mREX algorithm. Specifically, we perform a sensitivity study on the efficiency of $\w_{0}^*$: we examine its efficiency relative to the $D$-optimal design for various other values of the nominal parameters, and relative to the $\Phi_p$-optimal designs for other $p$. Note that the analytic results given by \cite{schorning2017} cannot be used for such analysis, as they do not cover actually $D$-optimal designs (only $D$-optimal minimally supported designs), nor other optimality criteria. 

\paragraph{Nominal parameters}
We change the value of the nominal parameter $ED_{50,2}^* = 25$ to new values $ED_{50,2}^* = e \in \left[ 5,490 \right]$, while keeping the other nominal parameters fixed. For each $e \in \{ 5, 10, 15, \dots, 490 \}$, we compute the $D$-optimal design $\vb^*_e$ using the mREX algorithm, and the $D$-optimal minimally supported design $\vb_{\min, e}^*$ via the analytic expression \eqref{xM}. For each value of $e$, we compute the efficiency of the original design $\w_0^*$ (which is the same as $\vb_{\min, 25}^*$) relative to $\vb_e^*$ and the efficiency of $\vb_{\min, e}^*$ relative to $\vb_e^*$; the results are given in Figure \ref{fig:minsuppdesigns} (left). The latter efficiency calculation allows us to determine for which values of $ED_{50,2}^*=e$ is the $D$-optimal minimally supported design $\vb_{\min, e}^*$ actually $D$-optimal among all designs. The figure shows that the efficiency of the original design $\w_0^*$ decreases slightly with changing $ED_{50,2}^*$, but still remains reasonably large (above 70\%). Figure \ref{fig:minsuppdesigns} also illustrates that for $ED_{50,2}^*$ greater than approximately $100$, the efficiency of $\vb_{\min, e}^*$ is lower than 100\%, indicating that the locally $D$-optimal design is no longer supported on only three points for larger values of $ED_{50,2}^*$. In contrast, the results show that the $D$-optimal minimally supported design $\vb_{\min, 25}^*$ is actually $D$-optimal among all designs for $ED_{50,2}^*=25$.

\paragraph{Other criteria} We also change $p$ within the interval $[0,6]$ discretized by $0.1$, and calculate the efficiency of the $D$-optimal (minimally supported) design $\w_0^*$ relative to the $\Phi_p$-optimal design computed by mREX; see Figure \ref{fig:minsuppdesigns} (the right panel). Similarly to the case of changing $ED_{50,2}^*$, the efficiency of the $D$-optimal design decreases only moderately (staying above 70\%). Note that other published algorithms cannot reproduce these results without significant extensions, as they are only formulated for (all or some) integer values of $p$.
\bigskip

Overall, our sensitivity study shows that the $D$-optimal minimally supported design \eqref{fixed_kirsten} is actually $D$-optimal among all designs for $ED_{50, 2}^* = 25$. Additionally, it proves that the design is a robust choice for the experiment provided that we suspect deviations of the true $ED_{50, 2}^*$ from its nominal value $25$, and that the design is reasonably efficient not only for $D$-optimality, but with respect to a large range of $\Phi_p$-criteria, which capture different statistical aspects of the parameter estimator. 

\begin{figure}
\includegraphics[width=\linewidth]{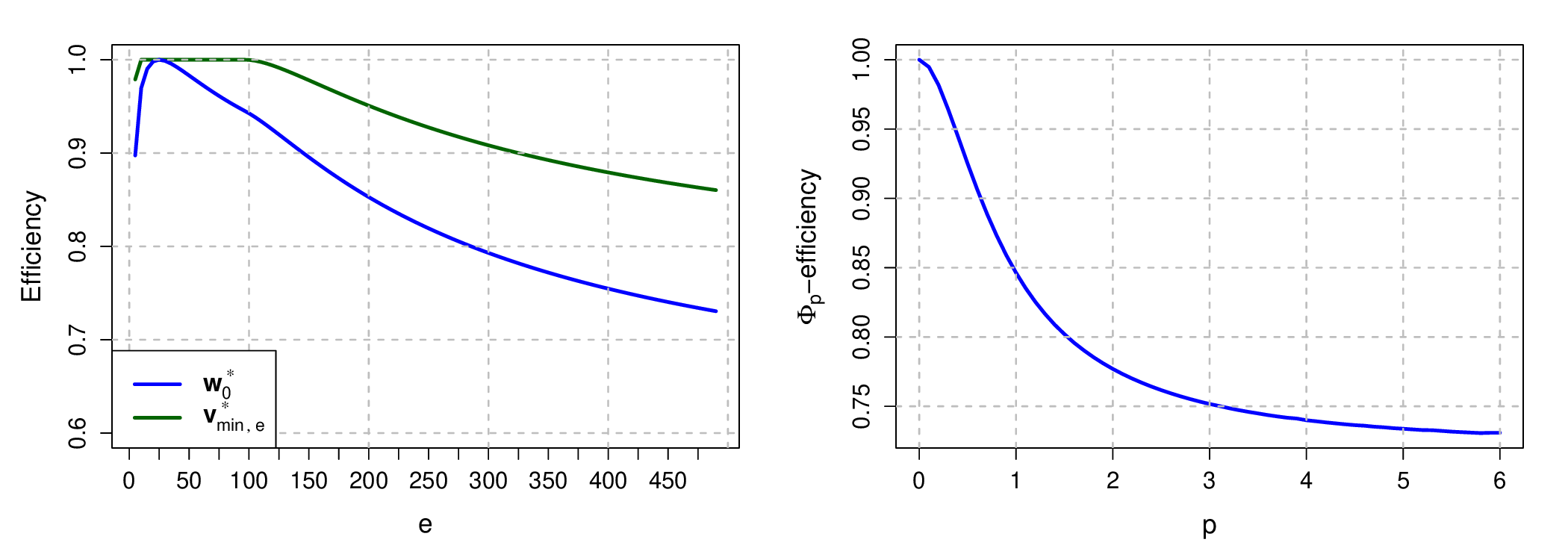}
\caption{Left: The efficiency of the fixed minimally supported design $\w_{0}^*$ and the changing minimally supported designs $\vb_{\min, e}^*$ relative to the $D$-optimal designs $\vb_e^*$ obtained by the mREX algorithm, for changing nominal parameter value of $ED_{50,2}^* = e$. Right: The efficiency of the $D$-optimal minimally-supported design $\w_{0}^*$ relative to the $\Phi_p$-optimal designs, $p \in \left[ 0,6\right]$, obtained by the mREX algorithm.}\label{fig:minsuppdesigns}
\end{figure}

\subsection{Bivariate Emax models with covariates}

To compare the performance of mREX with the current state-of-the-art algorithms, we consider more complex Emax settings. In particular, we suppose that the drug response may vary with patient characteristics (such as age and gender), which can be expressed by including linear effects of covariates in the model. This leads to the bivariate Emax model with covariate effects
\begin{equation}
    y_{x,\textbf{z},i} = E_{0,i} + \frac{x \cdot E_{\mathrm{max},i}}{x+ED_{50,i}} + \boldsymbol{\theta}_i^T \textbf{z} + \varepsilon_{x, \textbf{z},i}, \label{y_covar}
\end{equation}
where $\textbf{z} \in \mathcal{Z} \subseteq \R^k$ are the covariates characterizing the patient receiving the dose $x$, $\mathcal{Z}$ is the set of all permissible covariates, $\boldsymbol{\theta}_i \in \mathbb{R}^k$ is the corresponding vector of parameters, and $i \in \{1,2\}$ is the component of the response vector. Unlike for \eqref{multivar_emax}, the design points for \eqref{y_covar} are multidimensional: $\x = (x, \z^T)^T \in \R^{k+1}$. The total number of parameters in this model is $m = 6+2k$. Model \eqref{y_covar} is presented in \cite{covariates} for a single response, however, we consider bivariate responses. The dosage levels are discretized equidistantly on the interval $\left[0,500\right]$ with $N_d$ discrete values. The nominal values of the parameters are set as in Section \ref{ssEmax}: $E_{0,1}^* = E_{0,2}^* = 60$, $ED_{50,1}^* = ED_{50,2}^*= 25$ and $E_{\mathrm{max},1}^* = E_{\mathrm{max},2}^* = 294$, and the covariance matrix $\boldsymbol{\Sigma}$ is again given by \eqref{eSigma}.

Specifically, we compare the performance of mREX with the multiplicative algorithm (\cite{yu2010}) and the YBT algorithm (\cite{ybt}). Among other algorithms for multi-response problems, the vertex direction method, although simple and versatile, tends to be slow for all but the smallest problems. Moreover, the semidefinite programming and second-order cone programming approaches, while possessing some unique advantages such as the capability to include advanced constraints, can, with the current state of numerical solution methods for these classes of problems, only be used for mid-sized problems and they require special solvers. Therefore, we do not include them in the comparisons.

We examine the model \eqref{y_covar} with no covariates (which is the model \eqref{multivar_emax}) and  with $k \in \{3, 5, 9\}$ covariates. The permissible range for each covariate is $[-1,1]$, discretized to $N_c$ values, with various values of $N_c$. The total number of design points for a given model is therefore $N=N_d \times  N_c^k$. For each model, we computed designs with guaranteed $D$-efficiency of at least $0.99999$ by running the mREX and the YBT algorithms 100 times, and the multiplicative algorithm once\footnote{The multiplicative algorithm is deterministic, therefore each run of this algorithm produces the same result in approximately the same time.}; the summary of the running times is reported in Table \ref{tblTimes}. The table also lists all considered models (i.e., the values of $k$, $N_d$, $N_c$ and the resulting $N=N_d \times N_c^k$). The running times of mREX and YBT are expressed by medians and 5\%- and 95\%-quantiles of the $100$ runs.

Several representative runs of the algorithms are depicted in Figure \ref{fig:Runs}. Both Table \ref{tblTimes} and Figure \ref{fig:Runs} demonstrate that the mREX algorithm outperforms its competitors, and the (relative) difference in running times becomes even more pronounced as the model complexity increases. Moreover, as can be seen in Figure \ref{fig:Runs}, the mREX algorithm generally achieves a better design than its competitors at each time point during its run; one exception is the very beginning of the run, where it takes mREX longer to find the first design due to its slightly more time-consuming, but significantly better-performing initiation method. We also see that the multiplicative algorithm is not competitive in providing designs with a very high efficiency, while the YBT algorithm occasionally exhibits sudden jumps to near-optimal solutions (the ``nearly vertical'' lines in some cases in Figure \ref{fig:Runs}) after multiple iterations with only modest improvements. Nevertheless, even this sudden improvement is generally not enough to make the YBT algorithm faster than the mREX algorithm. Note, however, that the actual running time comparisons of different algorithms should always be taken with a grain of salt, as they are highly dependent on factors such as software efficiency, hardware specifications, and implementation details.

\begin{table}[ht!]
\caption{Performance of the mREX, YBT and multiplicative (MUL) algorithms in computing a design with $D$-efficiency at least 0.99999 for the model \eqref{y_covar}. For mREX and YBT, the tabulated values are ``median (5\%-quantile, 95\%-quantile)'' of 100 runs of the algorithms; for MUL, the running time of one run is reported (if smaller than 100 seconds). All times are in seconds.}\label{tblTimes}\begin{tabular}{cccc}
\toprule
Model \eqref{y_covar} & mREX & YBT  & MUL \\ \midrule
 $k=0$, $N = N_d = 50 001$ & $0.050$  $(0.04, 0.09)$ & $0.11$  $(0.09, 0.16)$ & $> 100$ \\ 
 $k=0$, $N = N_d = 500 001$ & $0.15$  $(0.14, 0.21)$ & $0.42$  $(0.38, 0.54)$ & $> 100$ \\ 
 $k=3$, $N_d = 26$, $N_c = 3$, $N=702$ & $0.22$  $(0.17, 0.28)$ & $0.44$  $(0.36, 0.54)$ & $1.75$ \\ 
 $k=3$, $N_d = 26$, $N_c = 9$, $N=18954$ & $0.25$  $(0.19, 0.31)$ & $0.48$  $(0.41, 0.57)$ & $31.27$ \\ 
 $k=5$, $N_d = 26$, $N_c = 3$, $N=6318$ & $1.17$  $(0.96, 1.49)$ & $1.96$  $(1.22, 2.58)$ & $30.25$ \\ 
 $k=5$, $N_d = 26$, $N_c = 7$, $N=436982$ & $2.61$  $(2.27, 3.08)$ & $4.49$  $(3.37, 5.93)$ & $> 100$ \\ 
 $k=9$, $N_d = 26$, $N_c = 2$ $N = 13312$ & $8.85$  $(7.22, 11.06)$ & $45.01$  $(35.30, 58.26)$ & $> 100$ \\ 
 $k=9$, $N_d = 26$, $N_c = 3$, $N = 511758$ & $14.39$  $(12.02, 17.79)$ & $47.58$  $(44.71, 54.35)$ & $> 100$ \\ \botrule
\end{tabular}

\end{table}

\begin{figure}
\includegraphics[width=\linewidth]{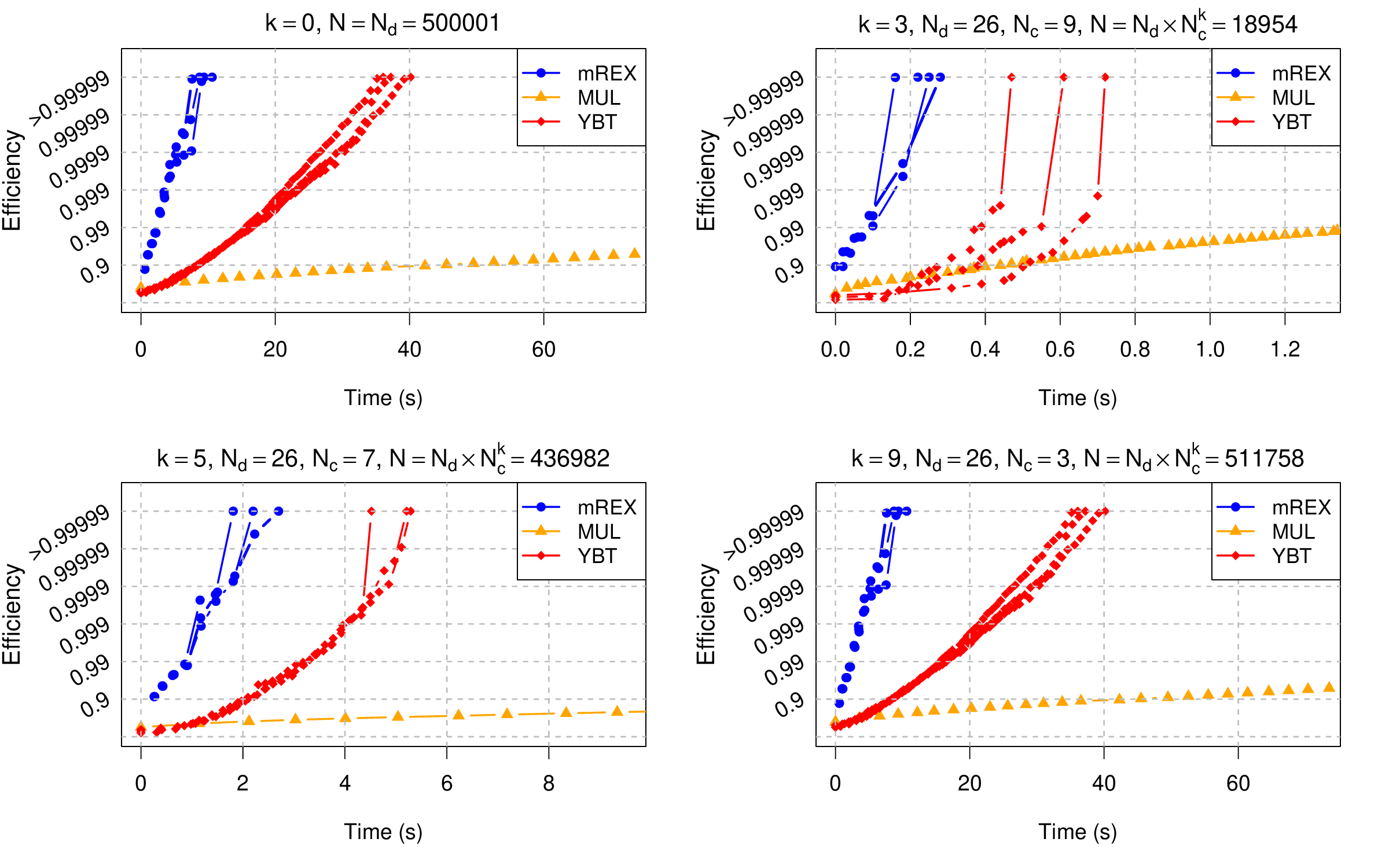}
\caption{Performance of the mREX algorithm for $D$-optimality (blue) in comparison with the classical multiplicative (orange) and the YBT (red) algorithms in the bivariate Emax model \eqref{y_covar} with $k = 0,3,5$ and $9$ covariates, and number of parameters $m = 6 + 2k$. Each line represents one run of the algorithm (i.e., achieved efficiency as a function of the running time). Note that the vertical axis is in logarithmic scale.}\label{fig:Runs}
\end{figure}

\section{Discussion}\label{sDiscussion}

For single-response regression models with finite design spaces, it has been shown (see \cite{rex}) that the REX algorithm exhibits  performance comparable to or better than other optimal design algorithms. However, the use of models with multiple responses is increasingly common in practical applications, and the literature on computing optimal designs for multi-response models remains sparse.

In this paper, we have shown that the proposed mREX algorithm retains the beneficial properties of REX in the context of multi-response regression models. In addition to computational efficiency, mREX offers several advantages: it can handle large problems and is straightforward to implement, not requiring advanced mathematical programming solvers.

A direct generalization of the original REX algorithm to the multi-response setting would not achieve such good performance. The effectiveness of mREX is ensured by two nontrivial improvements: the construction of efficient sparse initial designs via the mKYM method and the rapid computation of optimal design weight exchanges, particularly using the characteristic polynomial approach for $D$-optimality. These novel methods may have broader applications in optimal design algorithms, presenting an interesting direction for future research. 

Unlike REX, the algorithm mREX is formulated for the entire class of differentiable Kiefer's criteria, providing greater flexibility. Note however, that it is straightforward to extend mREX to cover any other optimality criterion for which directional derivatives can be efficiently computed.

The primary limitation of mREX is the absence of theoretical convergence proofs. However, this limitation has minimal practical impact, because we have a theorem to verify the optimality or near-optimality of any design produced by mREX; moreover, in numerous tests across a wide range of problems, mREX consistently converged to the optimal design.

\section*{Declarations}
\bmhead{Competing interests}  The authors have no relevant financial or non-financial interests to disclose.
\bmhead{Funding}  This work was supported by the Collegium Talentum Programme of Hungary and the Slovak Scientific Grant Agency (grant VEGA 1/0327/25).
\bmhead{Author contributions} All authors contributed to the methodology, mathematical proofs, software, writing and reviewing of the article.

\begin{appendices}

\section{Proof of Theorem 1}\label{ap:mineff}

\begin{proof}[Proof of Theorem~{\upshape\ref{thm1}}]
    Let $p \in [0, \infty), \M \in \Ss^m_{++}$ and $\N \in \Ss^m_+$. Define
    \begin{equation*}
        \Psi_{p,\M, \mathbf{N}}(\beta) := \left(\Phi_p\left[(1-\beta)\M + \beta \mathbf{N} \right] - \Phi_p\left[\M \right]\right)/\beta \text{ for } \beta \in (0,1].
    \end{equation*}

    Using concavity of $\Phi_p$ it is straightforward to prove that $\Psi_{p,\M, \mathbf{N}}(\beta)$ is non-increasing for $\beta \in (0,1]$, which implies that $\partial \Phi_p \left[\M, \mathbf{N}\right] \geq \Psi_{p,\M, \mathbf{N}}(\beta)$ for any $\beta \in (0,1]$. In particular, for $\beta=1$, $\M=\M(\w)$, $\mathbf{N}=\M(\w^*)$, where $\w$ is any nonsingular design and $\w^*$ is a $\Phi_p$-optimal design, we have
    \begin{equation*}
    \partial \Phi_p \left[\M(\w), \M(\w^*)\right] \geq \Phi_p\left[\M(\w^*) \right] - \Phi_p\left[\M(\w) \right].
    \end{equation*}
    This above inequality, the definition of the design efficiency and the formula \eqref{eq:dirderPhip} yield
    \begin{equation}\label{eq:auxeff}
     \mathrm{eff}_{\Phi_p}(\w) \geq \frac{\text{tr}\left[\M^{-p}(\w)\right]}{\text{tr}\left[\M^{-p-1}(\w)\M(\w^*)\right]}.
    \end{equation}
    The denominator on the right-side of \eqref{eq:auxeff} can be bounded from above as follows:
    \begin{eqnarray*}
    \text{tr}\left[\M^{-p-1}(\w)\M(\w^*)\right]&=&\sum_{i=1}^N w^*_i \text{tr}\left[\M^{-p-1}(\w)\Hb(\x_i)\right] \\
    &\leq& \max_{i=1,\ldots,N} \text{tr}\left[\M^{-p-1}(\w)\Hb(\x_i)\right].
    \end{eqnarray*}
    Therefore, we conclude that
    \begin{equation*}
      \mathrm{eff}_{\Phi_p}(\w) \geq \frac{\text{tr}\left[\M^{-p}(\w)\right]}{\max_{i=1,\ldots,N} (\mathbf{g}_p(\w))_i}.
    \end{equation*}
\end{proof}

\section{Proof of Theorem 2}\label{ap:ini_reg}

\begin{proof}[Proof of Theorem~{\upshape\ref{thm2}}]
    Let $J \in \{1,\ldots,m\}$ be the number of passes through the repeat-until cycle of Algorithm \ref{aINI}. Let $\Ss_0:=\emptyset$ and $\Pb_0:=\I_m$. For $j=1,\ldots,J$, let $\vb_j$ be the vector $\vb$, $i^*_j$ be the index $i^*$, $\Ss_j$ be the set $\Ss$, $\widetilde{\G}_j$ be the matrix $\widetilde{\G}$, and $\Pb_j$ be the matrix $\Pb$, obtained in the $j$th pass of the repeat-until cycle (with $\Pb_m=\0_{m \times m}$ if $J=m$, i.e., if the break in Line 7 was realized).

    First, we will use induction to show the following claim: for any $j=0,\ldots,J$, the matrix $\Pb_j$ is the orthogonal projector onto $\C^\perp(\M(\Ss_j))$. For $j=0$, the claim is trivial; let the claim hold for some $j \in \{0,\ldots,J-1\}$. Note that $\C(\M(\Ss_j)) = \C(\G_{i^*_1},\ldots,\G_{i^*_j})$. It is simple to verify that $\C(\M(\Ss_{j+1}))$, which is equal to $\C(\M(\Ss_j), \G_{i^*_{j+1}})$, is the direct sum of the mutually orthogonal linear spaces $\C(\M(\Ss_j))$ and $\C(\widetilde{\G}_{j+1})$. Therefore, $\Pb_{\C(\M(\Ss_{j+1}))} = \Pb_{\C(\M(\Ss_j))} + \Pb_{\C(\widetilde{\G}_{j+1})}$, and, as $\Pb_{\C^\perp(\A)} = \I_m - \Pb_{\C(\A)}$ for any matrix $\A$ with $m$ rows,
    \begin{equation}\label{eq:projsum}
        \Pb_{\C^\perp(\M(\Ss_{j+1}))} = \Pb_{\C^\perp(\M(\Ss_j))} - \Pb_{\C(\widetilde{\G}_{j+1})}.
    \end{equation}
    We obtained
    \begin{eqnarray*}
    \Pb_{j+1} &=& \Pb_j - \widetilde{\G}_{j+1}\widetilde{\G}_{j+1}^+ 
    = \Pb_j - \Pb_{\C(\widetilde{\G}_{j+1})}\\
    &=& \Pb_{\C^\perp(\M(\Ss_j))} - \Pb_{\C(\widetilde{\G}_{j+1})}
    = \Pb_{\C^\perp(\M(\Ss_{j+1}))},
    \end{eqnarray*}
    where the first equality corresponds to Line 10 of Algorithm \ref{aINI}, the second equality is the standard property of an orthogonal projector, the third equality is based on our induction assumption, and the last equality follows from \eqref{eq:projsum}. Consequently, we proved the claim for $j+1$. Thus, $\Pb_j$ is the orthogonal projector onto $\C^\perp(\M(\Ss_j))$ for any $j \in \{0,1,\ldots,J\}$.
    
    We will now show that $\tr\left[\Pb_{j+1}\right] \leq \tr\left[\Pb_j\right] - 1$ for all $j=0,1,\ldots,J-1$ which, in light of the claim proved above is equivalent to $\tr[\Pb_{\C^\perp(\M(\Ss_{j+1}))}] \leq \tr[\Pb_{\C^\perp(\M(\Ss_j))}] - 1$ for all $j=0,1,\ldots,J-1$. Taking \eqref{eq:projsum} into account, and the fact that the trace of a projector is equal to its rank, we only need to show that $\rank[\Pb_{\C(\widetilde{\G}_{j+1})}] \geq 1$ or, equivalently, that $\widetilde{\G}_{j+1}$ is not zero. 
    
    Assume that $\widetilde{\G}_{j+1}=\Pb_j\G_{i^*_{j+1}}$ \emph{is} a zero matrix. Then $\C(\G_{i^*_{j+1}}) \subseteq \C^\perp(\Pb_j)$ and, since $\vb_{j+1} \in \C(\Pb_j)$, we have $\|\vb^T_{j+1} \G_{i^*_{j+1}}\|^2=0$. Because of the norm-maximizing selection of $\G_{i^*_{j+1}}$ in Line 4 of Algorithm \ref{aINI}, we must have $\|\vb^T_{j+1} \G(\x_i)\|^2=0$ for all $i \in \{1,\ldots,N\}$. But then $\C(\G(\x_i)) \subseteq \C^\perp(\vb_{j+1}) \subsetneq \R^m$ for all $i \in \{1,\ldots,N\}$, which violates the nontriviality assumption \eqref{eq:basic}.

    Because $\tr\left[\Pb_{j+1}\right] \leq \tr\left[\Pb_j\right] - 1$ for each $j$, we have $\tr[\Pb_j] = 0$ after at most $m$ steps (i.e., $\tr\left[\Pb_J\right] = 0$ and $J \leq m$), which yields that $\Pb_J = \0_{m \times m}$. Consequently, $\C(\M(\Ss_J)) = \C^\perp(\Pb_J) = \R^m$, implying the nonsingularity of the uniform design $\w$ on $\{\x_i: i \in \Ss_J\} =  \{\x_i: i \in \Ss\}$.
\end{proof}



\end{appendices}


\bibliography{sn-bibliography}

\begin{thebibliography}{47}
\providecommand{\natexlab}[1]{#1}
\providecommand{\url}[1]{{#1}}
\providecommand{\urlprefix}{URL }
\providecommand{\doi}[1]{\url{https://doi.org/#1}}
\providecommand{\eprint}[2][]{\url{#2}}
 \bibcommenthead

\bibitem[{Atashgah and Seifi(2007)}]{atashgah2007}
Atashgah A, Seifi A (2007) Application of semi-definite programming to the
  design of multi-response experiments. IIE Transactions 39:763--769.
  \doi{https://doi.org/10.1080/07408170701245353}

\bibitem[{Atashgah and Seifi(2009)}]{atashgah2009}
Atashgah A, Seifi A (2009) Optimal design of multi-response experiments using
  semi-definite programming. Optimization and Engineering 10:75--90.
  \doi{https://doi.org/10.1007/s11081-008-9041-7}

\bibitem[{Atkinson and Woods(2015)}]{AtkinsonWoods}
Atkinson AC, Woods DC (2015) Designs for generalized linear models. In: Dean A,
  Morris M, Stufken J, et~al (eds) Handbook of design and analysis of
  experiments. Chapman and Hall/CRC, New York, p 471--514

\bibitem[{Atkinson et~al(2007)Atkinson, Donev, and Tobias}]{AtkinsonEA07}
Atkinson AC, Donev A, Tobias R (2007) Optimum experimental designs, with SAS.
  Oxford University Press, New York,
  \doi{https://doi.org/10.1093/oso/9780199296590.001.0001}

\bibitem[{B{\"o}hning(1986)}]{bohning}
B{\"o}hning D (1986) A vertex-exchange-method in {D}-optimal design theory.
  Metrika 33:337--347. \doi{https://doi.org/10.1007/BF01894766}

\bibitem[{Bär(2021)}]{bar2021}
Bär C (2021) The {F}addeev-{L}e{V}errier algorithm and the {P}faffian. Linear
  Algebra and its Applications 630:39--55.
  \doi{https://doi.org/10.1016/j.laa.2021.07.023}

\bibitem[{Bretz et~al(2010)Bretz, Dette, and Pinheiro}]{bretz2010}
Bretz F, Dette H, Pinheiro J (2010) Practical considerations for optimal
  designs in clinical dose finding studies. Statistics in Medicine 29:731--742.
  \doi{https://doi.org/10.1002/sim.3802}

\bibitem[{Chang(1997)}]{chang1997}
Chang SI (1997) An algorithm to generate near {D}-optimal designs for
  multiresponse-surface models. IIE Transactions 29:1073--1081.
  \doi{https://doi.org/10.1023/A:1018520923888}

\bibitem[{Chernoff(1953)}]{chernoff}
Chernoff H (1953) Locally optimal designs for estimating parameters. The Annals
  of Statistics 24:586--602. \doi{https://doi.org/10.1214/aoms/1177728915}

\bibitem[{Duarte(2023)}]{duarte}
Duarte BPM (2023) Exact optimal designs of experiments for factorial models via
  mixed-integer semidefinite programming. Mathematics 11:854.
  \doi{https://doi.org/10.3390/math11040854}

\bibitem[{Fedorov(1972)}]{fedorov1972}
Fedorov VV (1972) Theory of optimal experiments. Academic Press, New York

\bibitem[{Fedorov and Leonov(2013)}]{fedorovleonov2013}
Fedorov VV, Leonov SL (2013) Optimal Design for Nonlinear Response Models. CRC
  Press, Boca Raton, \doi{https://doi.org/10.1201/b15054}

\bibitem[{Filov\'a and Harman(2020)}]{FilovaHarman20}
Filov\'a L, Harman R (2020) Ascent with quadratic assistance for the
  construction of exact experimental designs. Computational Statistics
  35:775--801. \doi{https://doi.org/10.1007/s00180-020-00961-9}

\bibitem[{Gaffke(1985)}]{gaffke1985}
Gaffke N (1985) Directional derivatives of optimality criteria at singular
  matrices in convex design theory. Statistics 16(3):373--388.
  \doi{https://doi.org/10.1080/02331888508801868}

\bibitem[{Galil and Kiefer(1980)}]{GalilKiefer}
Galil Z, Kiefer J (1980) Time- and space-saving computer methods, related to
  {M}itchell's {DETMAX}, for finding {D}-optimum designs. Technometrics
  22:301--313. \doi{https://doi.org/10.2307/1268314}

\bibitem[{Harman and Rosa(2020)}]{PIN}
Harman R, Rosa S (2020) On greedy heuristics for computing {D}-efficient
  saturated subsets. Operations Research Letters 48:122--129.
  \doi{https://doi.org/10.1016/j.orl.2020.01.003}

\bibitem[{Harman and Trnovsk\'a(2009)}]{ht09}
Harman R, Trnovsk\'a M (2009) Approximate {D}-optimal designs of experiments on
  the convex hull of a finite set of information matrices. Mathematica Slovaca
  59:693--704. \doi{https://doi.org/10.2478/s12175-009-0157-9}

\bibitem[{Harman et~al(2020)Harman, Filov\'a, and Richt\'arik}]{rex}
Harman R, Filov\'a L, Richt\'arik P (2020) A randomized exchange algorithm for
  computing optimal approximate designs of experiments. Journal of the American
  Statistical Association 115:348--361.
  \doi{https://doi.org/10.1080/01621459.2018.1546588}

\bibitem[{Harman et~al(2021)Harman, Filov\'a, and Rosa}]{gex}
Harman R, Filov\'a L, Rosa S (2021) Optimal design of multifactor experiments
  via grid exploration. Statistics and Computing 31:70.
  \doi{https://doi.org/10.1007/s11222-021-10046-2}

\bibitem[{Harville(1997)}]{harville}
Harville DA (1997) Matrix Algebra From A Statistician's Perspective.
  Springer-Verlag, New York, \doi{https://doi.org/10.1007/b98818}

\bibitem[{Khuri et~al(2006)Khuri, Mukherjee, Sinha, and Ghosh}]{khuri}
Khuri AI, Mukherjee B, Sinha BK, et~al (2006) Design issues for generalized
  linear models: A review. Statistical Science 21:376--399.
  \doi{https://doi.org/10.1214/088342306000000105}

\bibitem[{Kiefer(1974)}]{kiefer1974}
Kiefer J (1974) General equivalence theory for optimum designs (approximate
  theory). Annals of Statistics 2(5):849--879.
  \doi{https://doi.org/10.1214/aos/1176342810}

\bibitem[{Kumar and Y{\i}ld{\i}r{\i}m(2005)}]{KumarYildirim}
Kumar P, Y{\i}ld{\i}r{\i}m EA (2005) Minimum volume enclosing ellipsoids and
  core sets. Journal of Optimization Theory and Applications 126:1--21.
  \doi{https://doi.org/10.1007/s10957-005-2653-6}

\bibitem[{L\'opez-Fidalgo(2023)}]{lopezfidalgo2023}
L\'opez-Fidalgo J (2023) Optimal Experimental Design: A Concise Introduction
  for Researchers. Springer, Cham,
  \doi{https://doi.org/10.1007/978-3-031-35918-7}

\bibitem[{Magnus and Neudecker(1999)}]{magnusneudecker}
Magnus JR, Neudecker H (1999) Matrix Differential Calculus with Applications in
  Statistics and Econometrics. John Wiley \& Sons, Chichester

\bibitem[{Magnusdottir(2016)}]{magnusdottir2016}
Magnusdottir BT (2016) Optimal designs for a multiresponse emax model and
  efficient parameter estimation. Biometrical Journal 58(3):518--534.
  \doi{https://doi.org/10.1002/bimj.201400203}

\bibitem[{P\'azman(1986)}]{pazman86}
P\'azman A (1986) Foundation of Optimum Experimental Design. Reidel Publ.,
  Dordrecht

\bibitem[{Ponte et~al(2023)Ponte, Fampa, and Lee}]{ponte2023}
Ponte G, Fampa M, Lee J (2023) Branch-and-bound for integer {D}-optimality with
  fast local search and variable-bound tightening. arXiv preprint
  \doi{https://doi.org/10.48550/arXiv.2302.07386}

\bibitem[{Pronzato and P\'azman(2013)}]{pronzatopazman2013}
Pronzato L, P\'azman A (2013) Design of Experiments in Nonlinear Models.
  Springer, New York

\bibitem[{Pronzato and Zhigljavsky(2014)}]{pronzato2014}
Pronzato L, Zhigljavsky AA (2014) Algorithmic construction of optimal designs
  on compact sets for concave and differentiable criteria. Journal of
  Statistical Planning and Inference 154:141--155.
  \doi{https://doi.org/10.1016/j.jspi.2014.04.005}

\bibitem[{Pukelsheim(2006)}]{pukelsheim}
Pukelsheim F (2006) Optimal design of experiments. SIAM, Philadelphia,
  \doi{https://doi.org/10.1137/1.9780898719109}

\bibitem[{Pukelsheim and Rieder(1992)}]{rounding}
Pukelsheim F, Rieder S (1992) Efficient rounding of approximate designs.
  Biometrika 79:763--770. \doi{https://doi.org/10.2307/2337232}

\bibitem[{Radloff and Schwabe(2023)}]{radloff2023}
Radloff M, Schwabe R (2023) {D}-optimal and nearly {D}-optimal exact designs
  for binary response on the ball. Statistical Papers 64:1021--1040.
  \doi{https://doi.org/10.1007/s00362-023-01434-z}

\bibitem[{Sagnol(2011)}]{sagnol_socp}
Sagnol G (2011) Computing optimal designs of multiresponse experiments reduces
  to second-ordercone programming. Journal of Statistical Planning and
  Inference 141:1684--1708. \doi{https://doi.org/10.1016/j.jspi.2010.11.031}

\bibitem[{Sagnol and Harman(2015)}]{sagnol2015}
Sagnol G, Harman R (2015) Computing exact {D}-optimal designs by mixed integer
  second-order cone programming. The Annals of Statistics 43:2198--2224.
  \doi{https://doi.org/10.1214/15-AOS1339}

\bibitem[{Schorning et~al(2017)Schorning, Dette, Kettelhake, and
  Wong}]{schorning2017}
Schorning K, Dette H, Kettelhake K, et~al (2017) Optimal designs for active
  controlled dose-finding trials with efficacy-toxicity outcomes. Biometrika
  104(4):1003--1010. \doi{https://doi.org/10.1093/biomet/asx057}

\bibitem[{Seufert et~al(2021)Seufert, Schwientek, and Bortz}]{seufert2021}
Seufert P, Schwientek J, Bortz M (2021) Model-based design of experiments for
  high-dimensional inputs supported by machine-learning methods. Processes
  9(3):508. \doi{https://doi.org/10.3390/pr9030508}

\bibitem[{Seurat et~al(2021)Seurat, Tang, Mentr\'e, and Nguyen}]{seurat2021}
Seurat J, Tang Y, Mentr\'e F, et~al (2021) Finding optimal design in nonlinear
  mixed effect models using multiplicative algorithms. Computer Methods and
  Programs in Biomedicine 207:106--126.
  \doi{https://doi.org/10.1016/j.cmpb.2021.106126}

\bibitem[{Ting(2006)}]{ting2006}
Ting N (2006) Dose finding in drug development. Springer, New York,
  \doi{https://doi.org/10.1007/0-387-33706-7}

\bibitem[{Tsirpitzi and Miller(2021)}]{tsirpitzi_miller}
Tsirpitzi RE, Miller F (2021) Optimal dose-finding for efficacy-safety-models.
  Biometrical Journal 63(6):1185--1201.
  \doi{https://doi.org/10.1002/bimj.202000181}

\bibitem[{Uci{\'n}ski and Patan(2007)}]{ucinski2007}
Uci{\'n}ski D, Patan M (2007) D-optimal design of monitoring network for
  parameter estimation of distributed systems. Journal of Global Optimization
  39:291--322. \doi{https://doi.org/10.1007/s10898-007-9139-z}

\bibitem[{Vandenberghe and Boyd(1999)}]{VandenbergheBoyd}
Vandenberghe L, Boyd S (1999) Applications of semidefinite programming. Applied
  Numerical Mathematics 29:283--299.
  \doi{https://doi.org/10.1016/S0168-9274(98)00098-1}

\bibitem[{Wong et~al(2018)Wong, Yin, and Zhou}]{wong2018}
Wong WK, Yin Y, Zhou J (2018) Optimal designs for multi-response nonlinear
  regression models with several factors via semidefinite programming. Journal
  of Computational and Graphical Statistics 28(1):61--73.
  \doi{https://doi.org/10.1080/10618600.2018.1476250}

\bibitem[{Wynn(1970)}]{wynn1970}
Wynn HP (1970) The sequential generation of {D}-optimum experimental designs.
  The Annals of Mathematical Statistics 41:1655--1664.
  \doi{https://doi.org/10.1214/aoms/1177696809}

\bibitem[{Yang et~al(2013)Yang, Biedermann, and E.}]{ybt}
Yang M, Biedermann S, E. T (2013) On optimal designs for nonlinear models: a
  general and efficient algorithm. Journal of the American Statistical
  Association 108:1411--1420.
  \doi{https://doi.org/10.1080/01621459.2013.806268}

\bibitem[{Yu et~al(2018)Yu, Kong, Ai, and Tsui}]{covariates}
Yu J, Kong X, Ai M, et~al (2018) Optimal designs for dose-response models with
  linear effects of covariates. Computational Statistics and Data Analysis
  127:217--228. \doi{https://doi.org/10.1016/j.csda.2018.05.017}

\bibitem[{Yu(2010)}]{yu2010}
Yu Y (2010) Monotonic convergence of a general algorithm for computing optimal
  designs. The Annals of Statistics 38:1593--1606.
  \doi{https://doi.org/10.1214/09-AOS761}

\end{thebibliography}

\end{document}